\documentclass[journal]{IEEEtran}
\usepackage{times} % assumes new font selection scheme installed
\usepackage{epsfig}
\usepackage{graphicx}
\usepackage{epstopdf}
\usepackage{algorithm,algorithmic,color,subfigure, latexsym, amsmath, amsfonts, amssymb, cite}
% For algorithms
\usepackage{algorithm}
\usepackage{algorithmic}
\usepackage{epsfig}
\usepackage{graphicx}
\usepackage{epstopdf}
\usepackage{bm,amsmath,amssymb,amsfonts,graphicx,epsfig,amsthm,color}
\usepackage{bbold,dsfont}
\usepackage{float}
\usepackage{hhline}
\usepackage[hyphens]{url}
\PassOptionsToPackage{bookmarks=false}{hyperref}

\usepackage[depth=-1]{bookmark}

\usepackage{booktabs}       % professional-quality tables
\usepackage{amsfonts}       % blackboard math symbols
\usepackage{microtype}      % microtypography

% use Times
\usepackage{times}
% For figures
\usepackage{graphicx} % more modern
%\usepackage{subcaption}

% For citations
\usepackage{wrapfig}

% For algorithms

%%%%%%%%%%%%%%%%%%%%%%%%%%%%%%%%%%%%%%%%%%%%%%%%%%%%%%%%%%%%%%%%%%%%%%%
\makeatletter

\makeatother

\usepackage{accents}
\makeatletter
\def\wid{\check{{\cc@style\underline{\mskip9.5mu}}}}
\def\Wideubar{\underaccent{{\cc@style\underline{\mskip6mu}}}}
\makeatother

\makeatletter
\def\wideubar{\underaccent{{\cc@style\underline{\mskip9.5mu}}}}
\def\Wideubar{\underaccent{{\cc@style\underline{\mskip6mu}}}}
\makeatother

\makeatletter
\def\widebar{\accentset{{\cc@style\underline{\mskip9.5mu}}}}
\def\Widebar{\accentset{{\cc@style\underline{\mskip6mu}}}}
\makeatother

\newtheorem{Lemma}{Lemma}

\newtheorem{Theorem}{Theorem}
\newtheorem{Definition}{Definition}
\newtheorem{Assumption}{Assumption}
\newtheorem{Example}{Example}

\theoremstyle{Remark}\newtheorem{Remark}{Remark}

\interfootnotelinepenalty=10000

\allowdisplaybreaks
\begin{document}
	
	\title{\Large \bf Model-Based and Data-Driven Control of Event- and Self-Triggered \\Discrete-Time LTI Systems}
	\author{Xin Wang,~Julian Berberich,~Jian Sun,
		%\IEEEmembership{Senior Member, IEEE},
		%Julian Berberich,
		Gang Wang, %\IEEEmembership{Member,~IEEE},
		Frank Allg{\"o}wer,
		%\IEEEmembership{Fellow,~IEEE}
		and~Jie Chen
		%\IEEEmembership{Fellow,~IEEE}}
		\thanks{This work was supported in part by the National Key R$\&$D Program of
China under Grant 2021YFB1714800, the National Natural Science Foundation
of China under Grants 62088101, 61925303, 62173034,
%This work was supported in part by the National Key R$\&$D Program of China under Grant 2018YFB1700100, the National Natural Science Foundation of China under Grants 61925303, 62088101, U20B2073, 61720106011, 62173034,
and the Deutsche Forschungsgemeinschaft (DFG, German Research Foundation) under Germanys Excellence Strategy - EXC 2075 - 390740016 and under grant 468094890. We acknowledge the support by the Stuttgart Center for Simulation Science (SimTech).}
		\thanks{
			X. Wang, J. Sun, and G. Wang are with the Key Laboratory of Intelligent Control and Decision of Complex System, Beijing Institute of Technology, Beijing 10081, China.~J. Sun, and G. Wang are also with the Beijing Institute of Technology Chongqing Innovation Center, Chongqing 401120, China (e-mail: xinwang@bit.edu.cn; sunjian@bit.edu.cn; gangwang@bit.edu.cn).

J. Berberich and F. Allg{\"o}wer are with the University of Stuttgart, Institute for Systems Theory and Automatic Control, 70550 Stuttgart, Germany (e-mail: julian.berberich@ist.uni-stuttgart.de, frank.allgower@ist.uni-stuttgart.de).

J. Chen is with the Department of Control Science and Engineering, Tongji University, Shanghai 201804, China, and also with the State Key Lab of Intelligent Control and Decision of Complex Systems, School of Automation, Beijing Institute of Technology, Beijing 100081, China 	
(e-mail: chenjie@bit.edu.cn).}
	}
	%\markboth{IEEE/CAA JOURNAL OF AUTOMATICA SINICA,~Vol.~X, No.~X, X~X}%
%{Shell \MakeLowercase{\textit{et al.}}: Bare Demo of IEEEtran.cls
%for Journals}
	\maketitle
	
	\begin{abstract}
The present paper considers the model-based and data-driven control of unknown linear time-invariant discrete-time systems under event-triggering and self-triggering transmission schemes.
To this end,
we begin by presenting a
dynamic event-triggering scheme (ETS) based on periodic sampling, and
a discrete-time looped-functional approach,
through which a model-based stability condition is derived.
Combining the model-based condition with a recent data-based system representation, a data-driven stability criterion in the form of linear matrix inequalities (LMIs) is established, which also offers a way of co-designing the ETS matrix and the controller.
To further alleviate the sampling burden of ETS due to its continuous/periodic detection, a self-triggering scheme (STS) is developed. Leveraging pre-collected input-state data, an algorithm for predicting the next transmission instant is given, while achieving system stability.
%A co-design method of the ETS matrix and the controller under the proposed data-driven STS is given.
Finally, numerical simulations showcase the efficacy of ETS and STS in reducing data transmissions as well as of the proposed co-design methods.
\end{abstract}

\begin{IEEEkeywords}
Data-driven control; event-triggering scheme; self-triggering scheme; discrete-time systems; LMIs.
\end{IEEEkeywords}

\section{Introduction}
\label{sec:introduction}
Sampled-data control has received considerable attention in the study of networked control systems thanks to its convenience in system design and analysis. % \cite{Chen1995}.
Traditional sampled-data schemes, which are mostly \emph{time-triggered}, feature low computational overhead and easy deployment,
but they may incur a sizable number of ``redundant" transmissions occupying network resources \cite{Donkers2012}.
Therefore, periodic sampling/transmission schemes are not appealing in networked control systems when communication resources (e.g., energy of wireless transmitting nodes, or network bandwidth) are limited.
Recently, research efforts have focused on developing transmission schemes that can use minimal communication resources while maintaining acceptable control performance; see, e.g., \cite{PENG2018113} for a survey.

A paradigm called the event-triggering scheme (ETS) \cite{Tabuada2007} has been proven efficient.
In ETS, the system state is continuously \cite{Tabuada2007}  or periodically \cite{Peng2013} monitored as in the time-triggered control, but measurements are transmitted only when deemed ``important".
This leads to a considerable decrease in resource occupancy while maintaining the system
performance.
In recent years, many variants of ETS have been proposed to further reduce transmissions; see, e.g.,
dynamic ETSs \cite{Girard2015,Liu2018}.
However, it is difficult for these ETSs to realize
continuous or periodic supervision of the system state.
To overcome this difficulty, the concept of self-triggering scheme (STS) was proposed.
Its core idea is to predict the next triggered instant based on a function constructed using the current sampled information as well as the system
knowledge.
Thus, in STS, the dedicated hardware in ETS is replaced with online \cite{Wang2010}, or offline \cite{Fiter2012} computations.
In networked control systems under STS, sensors are allowed to be completely shut off between adjacent sampling instants, leading to additional energy savings compared to the ETS.
Due to this advantage, the
STS  has been implemented in various fields \cite{Qi2020}. %\cite{Wan2021,Qi2020}.
It should be mentioned that
most existing ETS and STS are designed for
continuous-time systems. % while few are concentrated on discrete-time systems (e.g., \cite{Akashi2018,Ding2020,Hu2016}).
 %Liu2015
However, in the context of networked control systems, continuous-time systems are often controlled via digital computers, in which case one typically first discretizes a continuous-time system and works with the resulting discrete-time counterpart.
Current literature has few results on discrete-time ETS/STS (e.g., \cite{Ding2020,Hu2016}).
%This motivates well our research pursuit of generalizing the ETS and STS from continuous-time to discrete-time systems.

All above-mentioned triggering
schemes are model-based, namely they require explicit knowledge of system models.
Nonetheless, obtaining accurate system models can be computationally demanding and oftentimes impossible in real-world applications.
Naturally,
an interesting question is \emph{how to co-design a controller and a triggering scheme without any knowledge of the system model}.
Measured data sequences, in practice, can be easily obtained.
One solution to the above question is to first estimate a model based on measured data, also known as system identification
%measurements also known as system identification
\cite{Ljung1986}, and subsequently, perform model-based system analysis and controller design (see, e.g., \cite{Hou2013} for a survey).
Yet, such a two-stage approach comes with
an unavoidable drawback; that is, it is hard to provide an accurate model with guaranteed uncertainty bounds from limited and noisy data \cite{Matni2019,Oymak2019identification}.
An alternative approach, the so-called data-driven control, recently received increasing attention. Data-driven control
is aimed at learning control laws directly from data without resorting to any prior system identification steps.
Under this umbrella, various results  have been presented, including state-feedback and optimal control \cite{persis2020,van2020}, robust control \cite{berberich2020robust,Waarde2020}, control of time-delay systems \cite{rueda2020data},
predictive control \cite{Coulson2019}, and more can be found in the survey
\cite{markovskyabehavioral2021}.
By wedding the data-driven system representation in \cite{Waarde2020} with the model-based approach in \cite{Fridman2010}, a data-based stability condition for \emph{continuous-time} sampled-data control systems was derived in \cite{Berberich2020}, along with a controller design proposal. This data-driven framework has been extended to \emph{discrete-time} sampled-data systems \cite{wildhagen2021datadriven}. %wildhagen2021improved
%Up to date, few works have studied  data-driven control of sampled-data systems;
Data-driven ETS for continuous-time sampled-data systems with delays has been recently investigated in \cite{Wang2021data}.
It remains an untapped field to co-design a data-driven controller and ETS/STS for unknown discrete-time sampled-data systems.

These developments have motivated our work in this paper, which is focused on data-driven control of discrete-time sampled-data systems under ETS and STS.
As demonstrated in \cite{Girard2015}, under the dynamic ETS that contains an additional dynamic variable, the
triggering events can be reduced significantly compared to static ETS \cite{Tabuada2007} for continuous-time systems.
%, thereby largely improving the efficiency of energy utilization.
%We are naturally inspired to develop a discrete-time dynamic ETS. Our dynamic ETS is further designed based on periodic sampling, which is reminiscent of the discrete ETS \cite{Peng2013}, to lower transmission frequency.
In this paper, we develop a discrete-time dynamic ETS based on periodic sampling, which is reminiscent of the ETS \cite{Peng2013} for continuous-time systems. %in order to lower the transmission frequency.
%mitigate the computational burden caused by high-frequency monitoring of the system state.
For stability analysis of continuous-time sample-data systems, the looped-functional approach was proposed in \cite{Seuret2012} and subsequently explored by
\cite{Wang2021data}.
Looped-functionals often yield markedly improved stability conditions relative to common Lyapunov functionals,
since the looped-functional is only required to be monotonically decreasing at sampling points but not necessarily between these points.
 We generalize the looped-functional approach to discrete-time systems, by developing a discrete-time looped-functional (DLF) alternative, using which we derive model-based stability conditions for ETS.
%To address this challenge, we present a
%discrete-time looped-functional (DLF) approach firstly.
 %When the system matrices are \emph{unknown}, it is hard to co-design the controller and triggering parameters of ETS for discrete-time systems.
 Combining this condition and the data-based representation of discrete-time systems in
 \cite{Waarde2020},
 a data-based stability condition is established, which provides a data-driven co-design method of the controller and ETS parameters.

On the other hand, a model-based discrete-time STS is designed, which can predict the next transmission instant %relying on any extra device to supervise the system.
without requiring online observation of state measurements between transmission times.
For \emph{unknown} discrete-time systems, it remains a key challenge to pre-compute the next execution time of sensor and controller using data, while ensuring stability under STS.
To address this issue, we rewrite the discrete-time sampled-data system as a switched system.
Using the data-driven parametrization of switched systems in \cite{wildhagen2021datadriven}, a data-driven algorithm for pre-computing the next transmission instant is derived, which does not require any explicit model knowledge.
Specially, the proposed STS law can be deduced to a special case of the dynamic ETS. Subsequently, the co-designed controller and triggering parameters under the ETS are employed to guarantee the stability of the system under the corresponding STS.
%a data-driven stability condition along with a co-design method for sampled-data systems under STS is provided.
In a nutshell, the main contributions of the present paper are summarized as follows:
\begin{enumerate}
\item [\textbf{c1)}] A dynamic ETS based on periodic sampling for discrete-time systems, where the triggering condition depends on previously released data and current sampled data.% and  a data-driven ETS designed using only measured data; %, as well as differences between sampled and released data;
\item [\textbf{c2)}] Model- and data-based stability conditions for discrete-time systems under the dynamic ETS using a novel DLF approach, as well as model/data-driven  methods for co-designing the controller and triggering matrices; and,
\item [\textbf{c3)}] A model-based STS to predict
    the next transmission instant, and, building on this approach, a data-driven STS using only some pre-collected data from the system.
%\item [\textbf{c4)}] Model- and data-based stability conditions for self-triggered discrete-time systems
%    along with co-design methods of the controller and the triggering matrix based on both conditions.
\end{enumerate}

%Practical merits and effectiveness of the proposed transmission schemes and design methods are finally showcased through numerical examples.
While we only focus on closed-loop stability in this paper, it is straightforward to extend our results to obtain performance guarantees, e.g., on the closed-loop $\mathcal{L}_2$-gain, using similar arguments as in \cite{Peng2013}.

The rest of the paper is structured as follows.
In Section \ref{Sec:preliminaries}, we recall the problem setting as well as a discrete-time system representation which relies on noisy data.
In Section \ref{sec:eventtrigger}, an ETS control method for sampled-data systems is put forth, along with a DLF approach.
%Based on the results in Section \ref{Sec:preliminaries}, a model-based stability condition in the form of linear matrix inequalities (LMIs) is presented in Section \ref{sec:eventtrigger}, as well as a data-driven method for co-designing the controller gain and the triggering matrix.
Then, an STS is developed in Section \ref{sec:self}. In both Sections \ref{sec:eventtrigger} and \ref{sec:self}, we present results for the model-based as well as the data-driven case.
%Subsequently, a data-driven method for co-designing the controller gain and the triggering matrix is detailed.
Section \ref{sec:example} validates the merits and practicality of our methods and conditions using a numerical example. Section \ref{sec:conclude} draws concluding remarks.

{\it Notation.}
Throughout this paper, $\mathbb{N}$, $\mathbb{R^+}$, $\mathbb{R}^n$, and $\mathbb{R}^{n\times m}$ denote the sets of all non-negative
integers, non-negative real numbers, $n$-dimensional real vectors, and ${n\times m}$ real matrices, respectively. Then, we define %that $\mathbb{N}_0:=\mathbb{N}\cup \{0\}$ and
$\mathbb{N}_{[a,b]}:=\mathbb{N}\cap [a,b]$, $a,b\in \mathbb{N}$.
%The superscripts `$-1$' and `$T$' stand for the inverse and transpose of a matrix;
We write $P\succ 0$ ($P\succeq 0$) if $P$ is a symmetric positive (semi)definite matrix;
${\rm diag}\{\cdots\}$ denotes a block-diagonal matrix;
${\rm Sym}\{P\}$ represents the sum of $P^{\top}$ and $P$.
We write $[\cdot]$ if elements in the matrix can be inferred from symmetry.
Let %${\rm col} \{\cdots\}$ represent column vectors, and
$0$ ($I$) denote zero (identity) matrices
of appropriate
dimensions.
Notation `$\ast$' represents the symmetric term in (block) symmetric matrices.
%Let $\delta_{\max}(P)$ be the maximum singular value of $P$.
We use $\|\cdot\|$ to stand for the Euclidean norm of a vector.

\section{Preliminaries}\label{Sec:preliminaries}

Consider the following discrete-time linear system %Networked control systems (NCSs),
\begin{equation}\label{sec1:sys:disLTI}
x(t+1)=A x(t)+B u(t),~x(0)=x_0\in \mathbb{R}^{n}
\end{equation}
for $t\in \mathbb{N}$, where $x(t)\in \mathbb{R}^{n}$ is the system state, $u(t)\in \mathbb{R}^{m}$ is the control input, and $A\in\mathbb{R}^{n\times n}$, $B\in\mathbb{R}^{n\times m}$ are the system matrices.
Such discrete-time systems can model networked control systems (NCSs) equipped with digital devices, where the sensor, the controller, and the actuator act at discrete instants.
We consider in this paper that the system matrices $A$ and $B$ are \emph{unknown}, but some pre-collected state-input measurements $\{x(T)\}^{\rho}_{T=0}$ and $\{u(T)\}^{\rho-1}_{T=0}$ $(T\in \mathbb{N},~\rho\in \mathbb{N}_{[1,\infty]})$ satisfying the following dynamic
\begin{equation}\label{sys:data:perturbed}
x(T+1)= Ax(T)+B u(T)+B_ww(T)
\end{equation}
are available at discrete time instants $T \in \{0,1,\ldots,\rho\}$. Here, $B_w\in \mathbb{R}^{n\times n_w}$ is a known matrix, which has full column rank and models the influence of the disturbance on the collected data.
%The reason is that the required state-input data can be collected independently in an open-loop operation.
The measured data are corrupted by an {\it unknown} noise (perturbation) sequence $\{w(T)\}^{\rho-1}_{t=0}$, where $w(T)\in \mathbb{R}^{n_w}$ captures, e.g., process noise or unmodeled system dynamics. This noise only affects the data generated for the controller design and will be neglected in the closed-loop operation of our ETS and STS scheme, but we note that an extension in this direction is straightforward.
The available measurements can be stacked to form the following data matrices

\begin{align*}
X_+ &:=\big[\begin{array}{cccc}x(1)&x(2)&\cdots &x(\rho) \\\end{array}\big],\\
X&:=\left[\begin{array}{cccc}x(0)&x(1)&\cdots &x(\rho-1) \\\end{array}\right],\\
U&:=\left[\begin{array}{cccc}u(0)&u(1)&\cdots &u(\rho-1) \\\end{array}\right],\\
W&:=\left[\begin{array}{cccc}w(0)&w(1)&\cdots &w(\rho-1) \\\end{array}\right]
\end{align*}
where $X_+$, $X$, and $U$ are known, but $W$ is unknown.
Then,
it is evident that
\begin{align}\label{formu:data}
X_+=AX+BU+B_wW.
\end{align}

For the proposed approach, the required data are collected offline and can be transmitted once by the sensor for the controller design step. Therefore, we can neglect network effects on these data and we can assume that they are collected with sampling period $1$.
%Introducing a disturbance in the dynamic \eqref{sys:data:perturbed}
%is to account for possible noise in pre-collected state-input
%data, rather than accommodating an explicit disturbance on the
%system.
In practice, the noise is typically bounded.
We make the following standing assumption on the noise. %that has also been used in  \cite{Waarde2020,Berberich2020,wildhagen2021datadriven,wildhagen2021improved}.
\begin{Assumption}[\emph {Noise bound}]\label{Ass:disturbance}
\emph{The noise sequence $\{w(T)\}^{\rho-1}_{T=0}$ collected in the matrix $W$ belongs to
\begin{align}\label{data:disturbance}
\mathcal{W}=\bigg\{W\in\mathbb{R}^{n_w\times\rho} \Big |
\left[\!\begin{array}{cc}W^{\top}\\I \\\end{array}\!\right]^{\top}
  \left[\!\begin{array}{cc}Q_d\! & \!S_d\\\ast\! & \!R_d \\\end{array}\!\right]
  \left[\!\begin{array}{cc}W^{\top}\\I \\\end{array}\!\right]\succeq0 \bigg\}
\end{align}
for some known matrices $Q_d \prec 0 \in \mathbb{R}^{\rho\times \rho}$, $S_d \in \mathbb{R}^{\rho\times n_w}$, and $R_d=R_d^{\top} \in \mathbb{R}^{n_w\times n_w}$.}
\end{Assumption}
Assumption \ref{Ass:disturbance} provides a general framework  to model bounded additive noise, which has been used in similar forms in \cite{persis2020,Berberich2020,Waarde2020,wildhagen2021datadriven}.

Based on Equation \eqref{formu:data} and Assumption \ref{Ass:disturbance}, we define $\Sigma_{AB}$ to be the set of all pairs $[A~B]$ adhering to the measured data and the noise bound, namely
\begin{align*}%\label{sys:data}
\Sigma_{AB}:=\Big\{[A~ B] \Big | X_+=AX+BU+B_wW,~ W\in \mathcal{W}\Big\}.
\end{align*}
Then, an equivalent expression of $\Sigma_{AB}$ is provided in the form of a quadratic matrix inequality (QMI). %(which follows from \cite[Theorem 1]{Berberich2020}) is given below.
\begin{Lemma}[{\emph {Data-based representation} \cite[Lemma 4]{Waarde2020}}] \label{Lemma:system:data} The set $\Sigma_{AB}$ is equal to
%Systems \eqref{sys:data} are equivalent to
\begin{align*}%\label{data:represent}
\Sigma_{AB}=\bigg\{[A~B]\in\mathbb{R}^{n\times (n+m)} \Big |
\left[\begin{array}{cc}[A~B]^{\top}\\I \\\end{array}\right]^{\top}
  \Theta_{AB}[\cdot]^{\top}\succeq0
\bigg\}
\end{align*}
where $\Theta_{AB}:=
\left[\begin{array}{cc}-X & 0 \\ -U & 0 \\ \hline X_+ & B_w \\\end{array}\right]
  \left[\begin{array}{cc}Q_d & S_d\\\ast & R_d \\\end{array}\right]
  [\cdot]^{\top}.$
%\begin{align*}
%  \Theta_{AB}&:=
%\left[\begin{array}{cc}-X & 0 \\ -U & 0 \\ \hline X_+ & B_w \\\end{array}\right]
%  \left[\begin{array}{cc}Q_d & S_d\\\ast & R_d \\\end{array}\right]
%  [\cdot]^{\top}.
%  \end{align*}
\end{Lemma}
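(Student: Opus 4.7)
The proof establishes the two inclusions between $\Sigma_{AB}$ and the QMI-defined set by reducing both to a condition on $M := X_+ - AX - BU$ and $B_w$. I would begin by collapsing the ambient dimension: direct multiplication yields
\begin{equation*}
\begin{bmatrix}[A~B]^\top \\ I \end{bmatrix}^\top \begin{bmatrix}-X & 0 \\ -U & 0 \\ X_+ & B_w\end{bmatrix} = [\,M\ \ B_w\,],
\end{equation*}
so the QMI in the claim is equivalent to the $n\times n$ inequality
\begin{equation*}
[\,M\ \ B_w\,] \begin{bmatrix}Q_d & S_d \\ \ast & R_d\end{bmatrix} [\,M\ \ B_w\,]^\top \succeq 0,
\end{equation*}
which I denote $(\star)$ and aim to show is equivalent to the existence of some $W\in\mathcal{W}$ with $M=B_w W$.

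For the forward inclusion, given $[A~B]\in\Sigma_{AB}$ I would pick $W\in\mathcal{W}$ with $M = B_w W$, factor $[M,\,B_w] = B_w [W,\,I]$, and rewrite the left-hand side of $(\star)$ as
\begin{equation*}
B_w\!\left([W,\,I]\begin{bmatrix}Q_d & S_d \\ \ast & R_d\end{bmatrix}[W,\,I]^\top\right)\! B_w^\top.
\end{equation*}
The inner factor is positive semidefinite by the definition of $\mathcal{W}$ in \eqref{data:disturbance}, and congruence by $B_w$ preserves this, so $(\star)$ follows.

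The reverse inclusion is the delicate step, and its crux is to certify that $M\in\mathrm{range}(B_w)$ before one can hope to solve for $W$. Suppose for contradiction that $M = B_w W_0 + M^\perp$ with $B_w^\top M^\perp = 0$ and $M^\perp\neq 0$. Test $(\star)$ against any nonzero vector $v$ in the column span of $M^\perp$: then $B_w^\top v = 0$ kills the $S_d$ cross-terms and the $B_w R_d B_w^\top$ block, leaving
\begin{equation*}
v^\top M Q_d M^\top v \;=\; y^\top Q_d\, y, \qquad y := (M^\perp)^\top v.
\end{equation*}
Because $v$ lies in the column span of $M^\perp$ and is nonzero, $y$ is nonzero as well (the intersection of a subspace with its orthogonal complement is trivial), and $Q_d\prec 0$ then forces the expression to be strictly negative, contradicting $(\star)$. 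Hence $M^\perp = 0$ and $W := (B_w^\top B_w)^{-1}B_w^\top M$ satisfies $B_w W = M$. Substituting back into $(\star)$ gives $B_w K B_w^\top \succeq 0$ with $K$ the symmetric $n_w\times n_w$ quadratic form appearing in \eqref{data:disturbance}; full column rank of $B_w$ (equivalently, surjectivity of $B_w^\top$) transfers the inequality to $K\succeq 0$, i.e., $W\in\mathcal{W}$. The main obstacle is thus this range-inclusion step: the negative-definiteness of $Q_d$ has to be exploited in tandem with the full-column-rank hypothesis on $B_w$, and both assumptions are indispensable for the equivalence.
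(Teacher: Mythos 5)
Your proof is correct: the collapse of the QMI to $[\,M\ \ B_w\,]\left[\begin{smallmatrix}Q_d & S_d\\ \ast & R_d\end{smallmatrix}\right][\,M\ \ B_w\,]^{\top}\succeq 0$, the congruence argument for the forward inclusion, and the use of $Q_d\prec 0$ together with the full column rank of $B_w$ to force $M\in\mathrm{range}(B_w)$ and then recover $W\in\mathcal{W}$ are all sound. The paper itself gives no proof of this lemma --- it is quoted directly from \cite[Lemma 4]{Waarde2020} --- and your argument is essentially the standard proof of that cited result.
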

Lemma \ref{Lemma:system:data} provides a purely data-based representation for the system \eqref{sec1:sys:disLTI} with unknown matrices $A$ and $B$ using only data $X_+$, $X$ and $U$.
%Based on Lemma \ref{Lemma:system:data}, our main results of co-designing the controller and the ETS/STS are displayed in the following sections.
In order to guarantee the stability of \eqref{sec1:sys:disLTI} without relying on any
knowledge of the system matrices, we need to achieve a
stability criterion for all $[A~B]\in \Sigma_{AB}$.

Note that, in Fig. \ref{FIG:structure:event}, data are collected offline in
an open-loop experiment.
In online closed-loop operation, the system state is sampled and transmitted to the controller at time $t_k \in \mathbb{N}$, where
$t_0=0$, $t_{k+1}-t_k\geq1$, $k \in \mathbb{N}$.
%\begin{align}\label{sys:instants}
%t_0=0,~~ t_{k+1}-t_k\geq1,~~k \in \mathbb{N}.
%\end{align}
In the controller, the sampled state $x(t_k)$ is available, and the control input is computed via the linear state-feedback law $u(t_k)=Kx(t_k)$ (and it is held constant until $t_{k+1}-1$), where $K$ is the controller gain matrix to be designed.
The system \eqref{sec1:sys:disLTI} under the closed-loop sampled-data control can be written as
\begin{equation}\label{sys:sampling}
x(t+1)=A x(t)+BKx(t_k),~~t\in \mathbb{N}_{[t_k, t_{k+1}-1]}.
\end{equation}
%Traditional periodic transmission scheme but
%they may incur a sizable number of ¡°redundant¡± transmis-
%sions occupying network resources [2]. Therefore, periodic
%sampling/transmission schemes are not appealing in networked
%control systems when communication resources (e.g., energy
%of wireless transmitting nodes, or network bandwidth) are
%limited.
Traditional periodic transmission schemes have been used for data-driven control, e.g., by \cite{Berberich2020,wildhagen2021datadriven,wildhagen2021improved}, which determine the maximum sampling interval for which stability can be guaranteed.
%only ¡°informative¡± measurements are transmitted in
%the ETS so as
Avoiding ``redundant'' transmissions in networks,
%concerning the system \eqref{sys:sampling}, %we address the following two issues; i) developing ETS and STS to adaptively determine the next transmission instant to save communication resources; and, ii) providing data-driven solutions to co-design of the controller and the triggering scheme.
we develop a model-based ETS and STS for system \eqref{sys:sampling} to adaptively determine the transmission instant $t_k$ to save communication resources. % We consider data-driven solutions to co-designing the controller and the triggering scheme for both strategies, and to provide stability guarantees thereof.
%in the following sections.
Subsequently, we provide methods for data-driven co-design of ETS/STS and the corresponding controller based on the model-based stability conditions, where Lemma \ref{Lemma:system:data} is employed to describe the system matrices consistent with the data. The recent paper \cite{Wang2021data} addresses data-driven event-triggered control for continuous-time systems. In contrast, in the present paper, we address \emph{discrete-time} systems and event-triggered \emph{as well as self-triggered control}.

\section{Event-Triggered Control}\label{sec:eventtrigger}
In this section, we propose an ETS for the discrete-time system in Section \ref{subsec:event}.
as well as a novel looped-functional for discrete-time sampled-data system in Section \ref{SEC:DLF}. The stability of the event-triggered system is analyzed in Sections \ref{sec:model:ETS}. Based on this, Section \ref{design:event} derives a data-driven analysis and co-design method of the controller and the ETS without any explicit knowledge of the  system matrices.

% based on which  \ref{design:event} a data-driven co-design method of the controller and ETS is derived without using any prior knowledge of the system matrices.

\subsection{Discrete-time dynamic ETS}\label{subsec:event}
\begin{figure}
	\centering
 \includegraphics[scale=0.5]{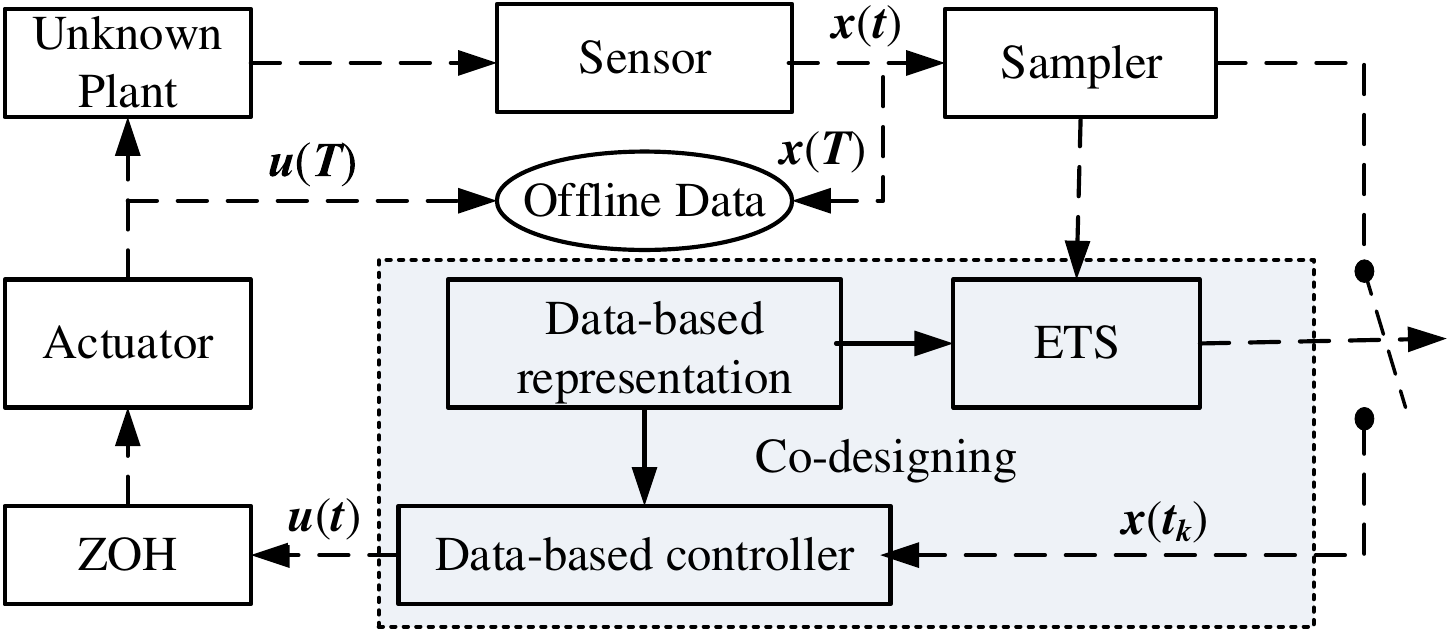}
	\caption{Structure of data-driven sampled-data systems under ETS.}
	\label{FIG:structure:event}
\end{figure}

\begin{figure}
	\centering
\includegraphics[scale=0.4]{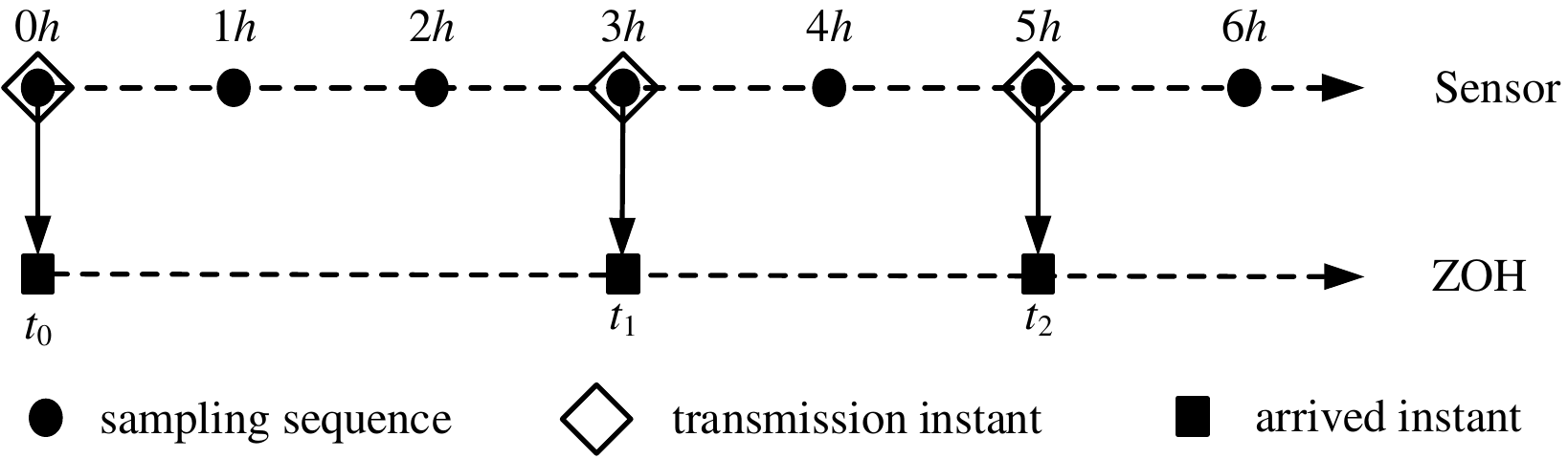}
	\caption{Evolution of sampling and transmission events.}
	\label{FIG:evolution:event}
\end{figure}
We use a discrete-time dynamic event-triggering  module to dictate the transmission instants $\{t_k\}_{k\in \mathbb{N}}$, as depicted in Fig. \ref{FIG:structure:event}.
In our dynamic ETS, the system states are firstly
periodically sampled at discrete instants $\{t_k+j h\}_{j,k \in \mathbb{N}}$, where the sampling interval $h$ is a constant satisfying $1\leq \underline{h} \leq h \leq \bar{h}$ for given lower and upper bounds $\underline{h},~\bar{h} \in \mathbb{N}$. %It should be mentioned that the designed $h$ could be lager than the step size of the discrete-time system \eqref{sec1:sys:disLTI}. Therefore, the trigger module is not required to detect the system state at every discrete time $t$.
%$h \in [\underline{h}, \bar{h}]$,  $\underline{h}, \bar{h}\in \mathbb{N}$.
%A time-varying delay $\tau(t)$ in the network communication is considered and $\tau(t)\in[0,\,h)$.
In the next step of the ETS,
the sampled signal $x(t_k+j h)$ is checked for the following triggering law%judged by the following triggering law
\begin{align}\label{sys:judgement}
&\eta(\tau_j^k)+\theta \rho(\tau_j^k)<0,
\end{align}
%with %$\tau_k$ and $\tau_{ki}$ respect the communication delays of released signal at instant $t_k$ and $i$-th sampled signal after time $t_k$, respectively;
%\begin{align*}
%&\eta(t)+\theta\Big[\sigma_1 \mathcal{X}^{\top}(t)\Omega \mathcal{X}(t)+\sigma_2 x^{\top}(t_k)\Omega x(t_k)\nonumber\\
%&~~~~~~~~~~~-\varrho^{\top}(t)\Omega \varrho(t)\Big]<0,~t\in[t_k,t_{k+1})
%\end{align*}
where $\theta>0$ is to be designed; $\tau_j^k:=t_k+jh$ for all $j \in \mathbb{N}_{[0,m_k]}$ with  $m_k=\frac{t_{k+1}-t_k}{h}-1$; $\rho({\tau_j^k})$ is a
discrete-time function defined by
\begin{equation}\label{trigger:function}
\rho({\tau_j^k}):=\sigma_1 x^{\top}({\tau_j^k})\Omega x({\tau_j^k})+\sigma_2 x^{\top}(t_k)\Omega x(t_k)-e^{\top}({\tau_j^k})\Omega e({\tau_j^k})
\end{equation}
where $\Omega\succ0$ is some weight matrix; $\sigma_1\geq0$ and $\sigma_2\geq0$ are triggering parameters to be designed; $e({\tau_j^k}):=x({\tau_j^k})-x(t_k)$ denotes the error between sampled signals $x({\tau_j^k})$ at the current sampling instant and $x(t_k)$ at the latest transmission instant;
%where $\mathcal{X}(t)$ is a piece-wise constant function given as follows
%\begin{equation*}
%\mathcal{X}(t):=
%\left\{
%\begin{array}{lll}
%x(t_k+ih)\!\!\!\!&, &t\in [t_k+ih,\, t_k+ih+h-1] \\
%x(t_k+mh)\!\!\!\!&, &t\in [t_k+mh,\, t_{k+1}-1]
%\end{array}
%\right.
%\end{equation*}
%where
%$i=0,1,\cdots,m$, with %$m=m(k)=\frac{t_{k+1}-t_k}{h}-1$;
and,
$\eta(t)$ is a dynamic variable, satisfying
$\eta(t)=\eta({\tau_j^k})$ for $t \in \mathbb{N}_{[{\tau_j^k}, {\tau_{j+1}^k}-1]}$ and
the following difference  equation
\begin{align}\label{sys:dynamic}
\eta({\tau_{j+1}^k})-\eta({\tau_j^k})=-\lambda\eta({\tau_j^k})+\rho({\tau_j^k}),
\end{align}
where $\eta(0)\geq0$ and $\lambda>0$ are given parameters.

Once condition \eqref{sys:judgement} is violated, the triggering module sends the sampled state $x({\tau_j^k})$ to the controller, and a new control input is computed which is held via a zero-order hold (ZOH) element in
%and a zero-order holder (ZOH) follows to maintain the output within
the interval $[t_{k+1}, t_{k+2}-1]$. Subsequently, the triggering module is updated using the latest transmitted data, and monitors the next sampled system state.
In summary, our discrete-time dynamic ETS can be formulated as
\begin{align}\label{sys:trigger}
&t_{k+1}=t_k+h\cdot\min_{j\in \mathbb{N}}\Big\{j>0\Big|\eta({\tau_j^k})+\theta\rho({\tau_j^k})<0\Big\}.
\end{align}
In Fig. \ref{FIG:evolution:event}, an example is presented to illustrate the proposed triggering transmission scheme. On the side of the sensor, discrete-time data are sampled periodically at a constant sampling period $h$. Then, the sampled data are sent to the controller at the instants $0h$, $3h$, and $5h$, when the condition \eqref{sys:judgement} is satisfied, but none of the others.
%The resultant closed-loop system of \eqref{sys:LTI} is given by
%\begin{equation}\label{sys:sampling}
%\left\{
%\begin{array}{lll}
%\dot{x}(t)=A x(t)&, &t\in[0,t_0)\\
%\dot{x}(t)=A x(t)+BKx(t_k-d)&, &t\in[t_k,t_{k+1})\\
%x(t)=0&, &t\in[-d,0)
%\end{array}
%\right.
%\end{equation}

The following lemma states that the extra dynamic variable satisfies $\eta({\tau_j^k})\geq 0$, $\forall t\in \mathbb{N}$. under the condition \eqref{sys:trigger} with the given parameters $\eta(0)\geq0$, $\theta>0$, and $\lambda>0$. The proof of Lemma \ref{lemma:nonneg.dynam} is similar to
\cite[Lemma 2]{Wang2021data}, which is omitted here.
%a restriction that $1-\lambda-\frac{1}{\theta}\geq0$ and for $t\in \mathbb{N}$.
\begin{Lemma}[\emph {Non-negativity}]\label{lemma:nonneg.dynam}
Let $\eta(0)\ge0$, %be a nonnegative constant,
$\Omega\succ 0$, % be a positive definite matrix,
and $\lambda>0$, $\theta> 0$ be constants satisfying $1-\lambda-\frac{1}{\theta}\geq0$. %, and matrix  $\Omega\succ 0$ be positive definite.
%$\eta_1(t)$ and $\eta_2(t)$ are defined by \eqref{sys:dynamic1} and \eqref{sys:dynamic2},
%separately. Let $x(t_k)$, $x_k(t)$, $\varrho_k(t)$ be given by \eqref{sys:trigger}.
Then, %for all $j\in \mathbb{N}_{[0,m_k]}$,
it holds that $\eta({\tau_j^k})\geq0$, $\forall j\in \mathbb{N}_{[0,m_k]}, k\in \mathbb{N}$, under the condition \eqref{sys:trigger}.
\end{Lemma}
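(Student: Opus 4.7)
The plan is to prove the claim by induction over the sampling index, exploiting the fact that between two consecutive transmission instants the triggering condition \eqref{sys:judgement} has \emph{not} been violated, which provides a usable lower bound on $\rho(\tau_j^k)$ in terms of $\eta(\tau_j^k)$.

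First I would rewrite the dynamics \eqref{sys:dynamic} explicitly as
\begin{equation*}
\eta(\tau_{j+1}^k)=(1-\lambda)\,\eta(\tau_j^k)+\rho(\tau_j^k).
\end{equation*}
Next, by the definition of the transmission rule \eqref{sys:trigger}, the index $j=m_k+1$ is the \emph{first} positive integer for which $\eta(\tau_j^k)+\theta\rho(\tau_j^k)<0$. Consequently, for every $j\in\mathbb{N}_{[0,m_k]}$ we must have $\eta(\tau_j^k)+\theta\rho(\tau_j^k)\ge 0$, equivalently
\begin{equation*}
\rho(\tau_j^k)\;\ge\;-\frac{1}{\theta}\,\eta(\tau_j^k).
\end{equation*}
Substituting this lower bound into the recursion yields
\begin{equation*}
\eta(\tau_{j+1}^k)\;\ge\;\Bigl(1-\lambda-\tfrac{1}{\theta}\Bigr)\eta(\tau_j^k),
\end{equation*}
and the standing hypothesis $1-\lambda-1/\theta\ge 0$ then propagates non-negativity along the whole sub-sequence inside the interval $[t_k,t_{k+1}-1]$.

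With this one-step propagation in hand, I would close the argument by a double induction. The base case is immediate: $\eta(\tau_0^0)=\eta(0)\ge 0$ by assumption. For the inductive step within the $k$-th inter-event interval, if $\eta(\tau_j^k)\ge 0$ for some $j\in\mathbb{N}_{[0,m_k-1]}$, then the inequality above delivers $\eta(\tau_{j+1}^k)\ge 0$. To pass from the $k$-th to the $(k\!+\!1)$-th interval, I would apply the same inequality one more time at $j=m_k$ (where the triggering condition is \emph{still} satisfied, since triggering occurs only at $j=m_k+1$) to obtain $\eta(\tau_{m_k+1}^k)=\eta(t_{k+1})=\eta(\tau_0^{k+1})\ge 0$, thereby providing the base case for the next interval.

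The only subtlety I anticipate lies in bookkeeping at the boundary between intervals: it must be emphasized that the triggering inequality holds for all $j\le m_k$, so that the recursion can be advanced one extra step to generate the initial value $\eta(\tau_0^{k+1})$ required for the next round of induction. Once this is handled cleanly, the conclusion $\eta(\tau_j^k)\ge 0$ for all $j\in\mathbb{N}_{[0,m_k]}$ and all $k\in\mathbb{N}$ follows, which matches the template of \cite[Lemma~2]{Wang2021data} as the authors note.
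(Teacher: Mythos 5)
Your argument is correct and is exactly the route the paper intends: the paper omits the proof and refers to the continuous-time analogue in \cite[Lemma 2]{Wang2021data}, and your induction via $\eta(\tau_{j+1}^k)=(1-\lambda)\eta(\tau_j^k)+\rho(\tau_j^k)\ge(1-\lambda-\tfrac{1}{\theta})\eta(\tau_j^k)$ is the natural discrete-time version of that proof, including the correct handling of the boundary $\tau_{m_k+1}^k=\tau_0^{k+1}$.

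One small justification needs patching: the minimum in \eqref{sys:trigger} is taken over $j>0$, so the transmission rule by itself gives $\eta(\tau_j^k)+\theta\rho(\tau_j^k)\ge 0$ only for $j\in\mathbb{N}_{[1,m_k]}$, not for $j=0$ as your ``consequently'' asserts. The first step of each interval must instead use that $e(\tau_0^k)=x(t_k)-x(t_k)=0$, whence $\rho(\tau_0^k)=(\sigma_1+\sigma_2)x^{\top}(t_k)\Omega x(t_k)\ge 0$ by $\Omega\succ0$ and $\sigma_1,\sigma_2\ge0$; combined with the inductive hypothesis $\eta(\tau_0^k)\ge0$ and $1-\lambda\ge\tfrac{1}{\theta}>0$, this gives $\eta(\tau_1^k)\ge(1-\lambda)\eta(\tau_0^k)\ge0$ (and also covers the case $m_k=0$, where the only step available is the one at $j=0$). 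With that one-line fix the induction closes as you describe.
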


\begin{table}[t]
\caption{Hyperparameters and description.}
\begin{center}      % Give a unique label
\setlength{\tabcolsep}{4pt}
\renewcommand\arraystretch{1.}
\begin{tabular}{llccccccccc}
\hline\noalign{\smallskip}
Hyperparameter & Description \\
\noalign{\smallskip}\hline\noalign{\smallskip}
$h \in \mathbb{N}_{[1, \infty)}$ & Sampling interval\\
$\sigma_1,\sigma_2 \in [0,1]$ & Triggering threshold parameters\\
$\lambda,\theta\in (0,\infty)$ & Triggering parameters\\
$\eta({\tau_j^k})\in [0,\infty)$& Triggering dynamic variable\\
$\Omega\succ0$ & Triggering matrix\\
\hline\noalign{\smallskip}
\end{tabular}
\end{center}
\label{Tab:hyperparameters}
\end{table}

\begin{Remark}\label{general}
\emph{The transmission scheme \eqref{sys:trigger} is a discrete-time analog of the continuous-time dynamic ETS proposed by \cite{Wang2021data}.
Seen from \eqref{sys:trigger}, our ETS becomes the dynamic ETS proposed in \cite{Hu2016} by setting $h=1$ and $\sigma_2=0$; when, in addition, the parameter $\theta$ approaches infinity, the condition \eqref{sys:trigger} further boils down to
the static ETS \cite{Liu2015,Tripathy2017Discrete}; moreover, it degenerates to the time-triggering scheme \cite{Seuret2012} when $\theta$ tends to infinity, $\sigma_1=0$, and $\sigma_2=0$.
Thus, our triggering scheme
in \eqref{sys:trigger} unifies and generalizes several existing ETS, and is expected to further reduce
data transmissions %if compared to these other schemes
and save
transmission resources, which is shown in Section \ref{sec:example}.}
\end{Remark}

\subsection{Discrete-time looped-functional approach}\label{SEC:DLF}
\begin{figure}
	\centering
		 \includegraphics[scale=0.49]{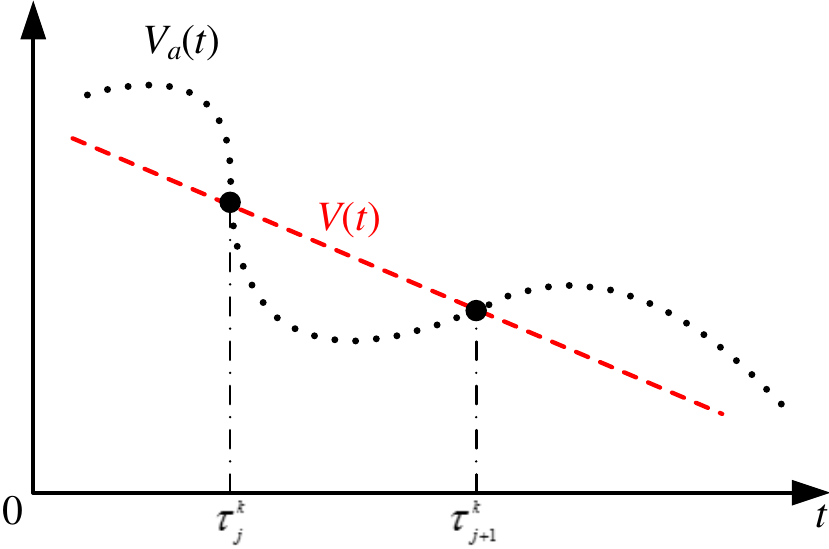}
	\caption{Illustration of Lemma \ref{lemma:DLF}. The function $V_a(t)$ tends to zero since $\vartriangle V(t)<0$ and $V_l(x({\tau_{j+1}^k}),{\tau_{j+1}^k}) = V_l(x({\tau_j^k}),{\tau_j^k})$.}
	\label{FIG:DLF}
\end{figure}
This section develops a discrete-time looped-functional (DLF) approach for stability analysis.
To begin with, the definition of DLF is given below.
\begin{Definition}
[\emph {DLF}]\label{def:DLF}
\emph{Consider any series of time instants $\{{\tau_j^k}\}$. A functional
$V_l(x,t):\mathbb{R}^n\times \mathbb{N}_{[{\tau_j^k}, {\tau_{j+1}^k}-1]} \rightarrow \mathbb{R}$ is called a {\it DLF}, if it satisfies
\begin{equation}\label{def:inequality}
	V_l(x({\tau_{j+1}^k}),{\tau_{j+1}^k}) = V_l(x({\tau_j^k}),{\tau_j^k}),~\forall j\in \mathbb{N}.
\end{equation}}
	\end{Definition}
Since the property that $V_l(x({\tau_j^k}),{\tau_j^k})= V_l(x({\tau_{j+1}^k}),{\tau_{j+1}^k})$ shares the characteristics of the looped-functional in \cite{Seuret2012} but is defined at discrete time instants, we call $V_l(x(t),t)$ a DLF.
Similar to the continuous looped-functional approach \cite[Theorem 1]{Seuret2012},
the following stability theorem can be obtained by using the (discontinuous) {DLF}.
\begin{Lemma} \label{lemma:DLF}
Choose a function $V_a:\mathbb{R}^n\rightarrow \mathbb{R^+}$ with scalars $c_2>c_1>0$, and $p>0$
satisfying $\forall x\in \mathbb{R}^{n}, ~c_1 ||x||^p \leq V_a(x) \leq c_2 ||x||^p$.
%\begin{equation}\label{6}
%\forall x\in \mathbb{R}^{n}, ~~c_1 ||x||^p \leq V_a(x) \leq c_2 ||x||^p.
%\end{equation}
Then the following statements are equivalent.
\begin{enumerate}
\item The increment of the function $V_a$ is strictly negative at all $\{{\tau_j^k}\}$,
i.e., $\Delta V_a:=V_a(x({\tau_{j+1}^k}))-V_a(x({\tau_j^k}))<0, ~\forall x({\tau_j^k})\neq 0$.
%\begin{equation*}\label{7}
%\Delta V_a:=V_a(x({\tau_{j+1}^k}))-V_a(x({\tau_j^k}))<0, ~\forall x({\tau_j^k})\neq 0
%\end{equation*}
\item There exists a {{DLF}} $V_l$, such that for all $t\in [{\tau_j^k}, {\tau_{j+1}^k}-1]$, %and $\{{\tau_j^k}\}$,
    $\Delta V(t):=V(t+1)-V(t)<0,~\forall x({\tau_j^k})\neq 0$,
    %$(k,x,t)\in \mathbb{N} \times \mathbb{K} \times \mathbb{N}_{[{\tau_j^k}, {\tau_{j+1}^k}-1]}$
    and $\{{\tau_j^k}\}$,
%\begin{equation}\label{lemma:DLF:V}
%\Delta V(t):=V(t+1)-V(t)<0,~\forall x({\tau_j^k})\neq 0
%\end{equation}
where $V(t):=V_a(x(t))+V_l(x(t),t)$.
%and $U(x(t),t)$, $W(x(t),t)$ are defined in Definition 1 and 2, respectively.
\end{enumerate}
Moreover, if either one of the two statements is true, the origin of the system \eqref{sys:sampling} is asymptotically stable.
\label{lemma:GELF}
\end{Lemma}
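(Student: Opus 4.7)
The plan is to prove the equivalence $(1)\Leftrightarrow(2)$ by a telescoping argument built around the loop identity \eqref{def:inequality}, and then to deduce asymptotic stability of \eqref{sys:sampling} from $(1)$ by a standard discrete Lyapunov argument exploiting the sandwich bound on $V_a$ and the boundedness of the inter-sampling gap.

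For $(2)\Rightarrow(1)$, I would fix an interval $[{\tau_j^k},{\tau_{j+1}^k}-1]$ and sum the strict per-step inequality $\Delta V(t)<0$ for $t={\tau_j^k},\ldots,{\tau_{j+1}^k}-1$. Telescoping yields $V({\tau_{j+1}^k})-V({\tau_j^k})<0$, where $V({\tau_{j+1}^k})$ is interpreted using the formula of the subsequent interval. Splitting $V=V_a+V_l$ and applying \eqref{def:inequality} cancels the $V_l$-contribution, leaving $\Delta V_a<0$, which is exactly $(1)$.

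For the converse $(1)\Rightarrow(2)$, I would exhibit an explicit DLF by linear interpolation. On each interval $[{\tau_j^k},{\tau_{j+1}^k}-1]$ define
\begin{equation*}
V_l(x(t),t)=\tfrac{{\tau_{j+1}^k}-t}{{\tau_{j+1}^k}-{\tau_j^k}}V_a(x({\tau_j^k}))+\tfrac{t-{\tau_j^k}}{{\tau_{j+1}^k}-{\tau_j^k}}V_a(x({\tau_{j+1}^k}))-V_a(x(t)).
\end{equation*}
Substitution gives $V_l(x({\tau_j^k}),{\tau_j^k})=0$ for every $j$, so \eqref{def:inequality} holds automatically. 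The composite $V(t)=V_a(x(t))+V_l(x(t),t)$ then collapses to the linear interpolant between $V_a(x({\tau_j^k}))$ and $V_a(x({\tau_{j+1}^k}))$, and a one-line computation yields the uniform decrement $\Delta V(t)=\Delta V_a/({\tau_{j+1}^k}-{\tau_j^k})<0$ on every step of the interval, including the boundary step $t={\tau_{j+1}^k}-1$ where $V(t+1)$ is read using the next interval's formula.

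For the asymptotic stability claim, $(1)$ makes $\{V_a(x({\tau_j^k}))\}$ nonnegative and strictly decreasing along any nonzero trajectory, hence convergent to some $V^\star\ge0$. The sandwich $c_1\|x\|^p\le V_a(x)\le c_2\|x\|^p$ confines the iterates to a compact annulus, and since the one-step map $x({\tau_j^k})\mapsto x({\tau_{j+1}^k})$ under \eqref{sys:sampling} is realized by one of finitely many closed-loop LTI operators (indexed by the admissible, bounded inter-sampling gap), continuity and compactness rule out $V^\star>0$; boundedness of the open-loop transition between samplings then extends convergence to all $t$. I expect the converse direction to be the main technical step, and the linear-interpolation construction above to be its crucial ingredient, since it simultaneously enforces \eqref{def:inequality} at the sampling boundaries and redistributes the sampling-instant decrease uniformly over every integer step.
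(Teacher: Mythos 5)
Your proposal is correct and follows essentially the same route as the paper: the same telescoping argument for $(2)\Rightarrow(1)$, a DLF for $(1)\Rightarrow(2)$ that coincides with the paper's $V_l(x(t),t)=-V_a(x(t))+\tfrac{t}{\tau_{j+1}^k-\tau_j^k}\Delta V_a$ up to an additive constant on each interval (your interpolant has the extra virtue of vanishing at both endpoints, which removes any ambiguity about which interval's formula is used at $\tau_{j+1}^k$), and the same decrease-at-sampling-instants argument for stability, which you in fact justify more carefully than the paper does via compactness rather than simply asserting $\|x(\tau_j^k)\|\rightarrow 0$.
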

\begin{proof}
Let $j\in \mathbb{N}_{[0,m_k]}$, $k\in \mathbb{N}$, and $t\in \mathbb{N}_{[{\tau_j^k}, {\tau_{j+1}^k}-1]}$.
	{\emph {2) $\Rightarrow$ 1)}}.
	{Assume that 2) is satisfied}. Summing up
	$\Delta V(t)$ over $[{\tau_j^k}, {\tau_{j+1}^k}-1]$ yields
	\begin{align*}
		\sum \limits_{t={\tau_j^k}}^{{\tau_{j+1}^k}-1} &\Delta V(t)
		=\sum \limits_{t={\tau_j^k}}^{{\tau_{j+1}^k}-1} \Big[V_a(x(t+1))-V_a(x(t))\Big]\\
          & +\sum \limits_{t={\tau_j^k}}^{{\tau_{j+1}^k}-1} \Big[V_l(x(t+1),t+1)-V_l(x(t),t)\Big]<0.
	\end{align*}
	According to \eqref{def:inequality}, it follows that $\Delta V_a<0$.\\
	{\emph {1) $\Rightarrow$ 2)}}.
	Assume 1) is satisfied. Similar to \cite[Theorem 1]{Seuret2012}, the following functional is introduced for $t\in \mathbb{N}_{[{\tau_j^k}, {\tau_{j+1}^k}-1]}$
	\begin{align*}	
     V_l(x(t),t)=-V_a(x(t))+\frac{t}{{\tau_{j+1}^k}-{\tau_j^k}}\Delta V_a
	\end{align*}
    which satisfies \eqref{def:inequality}.
	Then, we have that
	\begin{align*}
		\Delta V(t)&=V_a(x(t+1))-V_a(x(t))-V_a(x(t+1))\\
&~~~~+V_a(x(t))+\frac{t+1}{{\tau_{j+1}^k}-{\tau_j^k}}\Delta V_a-\frac{t}{{\tau_{j+1}^k}-{\tau_j^k}}\Delta V_a\\
&=\frac{1}{{\tau_{j+1}^k}-{\tau_j^k}}\Delta V_a<0.
	\end{align*}
	This proves the equivalence between 1) and 2).\\
	{\emph {Asymptotic stability}}. From the condition 1), we have that
$||x({\tau_j^k})||\rightarrow 0$ as $k\rightarrow \infty$.
Finally, similar to \cite{Fuj2009,Briat2012}, there exist $\delta<\infty$ yielding $||x(t)|| \leq \delta ||x({\tau_j^k})||$ for all $t\in \mathbb{N}_{[{\tau_j^k}, {\tau_{j+1}^k}-1]}$,  which implies that the system \eqref{sys:sampling} is asymptotically stable under 1) or 2).
\end{proof}

\begin{Remark}\emph{Lemma \ref{lemma:DLF} was inspired by the continuous-time looped-functional approach \cite{Seuret2012}.
%A DLF is proposed
%and introduced into Lemma \ref{lemma:DLF}. % and combined with a common Lyapunov functional.
If compared to the discrete-time Lyapunov stability theorem, Lemma \ref{lemma:DLF} provides less conservative stability conditions by constructing a proper DLF for discrete-time sampled-data systems, since the added DLF is not necessarily a positive definite functional.
%Similar to the continuous-time looped-functional approach, a less conservative stability condition is obtained by means of Lemma \ref{lemma:DLF} with a properly constructed DLF for discrete-time systems with sample-data control.
This can be clearly seen from Fig. \ref{FIG:DLF}.
Note that the discrete Lyapunov functional $V_a(t)$ is not required to decrease at each time $t$, %However, $V_a(t)$ still tends to the origin, while $V_a(t)$
but only at ordered and non-adjacent discrete points $t_k$.
To the best of our knowledge, a DLF as in Lemma \ref{lemma:DLF} has not yet been introduced in discrete-time sampled-data control systems.}
\end{Remark}

\subsection{Model-based stability analysis}\label{sec:model:ETS}
In this subsection, we develop a model-based stability condition for the sampled-data system \eqref{sys:sampling} under the transmission scheme \eqref{sys:trigger}, where matrices $A$ and $B$ are assumed \emph{known}.
 Our triggering strategy \eqref{sys:trigger} comprises a time-trigger with sampling interval $h$.
Increasing values of $h$ lead to fewer data transmissions under our triggering strategy, where $h$ is directly related to the stability properties. Based on this observation,
we derive a stability criterion for system \eqref{sys:sampling} under \eqref{sys:trigger} using the DLF approach. Before moving on, a useful result is given.
\begin{Lemma}[\emph {Summation inequality}]\label{lemma:summation}
For any vector $\vartheta\in\mathbb{R}^m$, matrices $R=R^{\top}\in\mathbb{R}^{n\times n}\succ0$, $N\in\mathbb{R}^{m\times 2n}$, scalars $\alpha\leq\beta \in \mathbb{N}$, and %a discrete-time function $\{x(s)|s\in \mathbb{N}_{[\alpha, \beta-1]}\}$,
a sequence $\{x(s)\}_{s=\alpha}^{\beta-1}$,
the following summation inequality holds true
\begin{equation*}\label{Lemma:inequality}
\begin{aligned}
-\sum \limits_{i=\alpha}^{\beta-1} y^{\top}(i)R y(i)
\leq (\beta\!-\!\alpha)\vartheta^{\top} \!N \mathcal{R}^{-1}\! N^{\top}\!\vartheta
\!+\!{\rm Sym}\left\{\vartheta^{\top} N \Pi\right\}
\end{aligned}
\end{equation*}
where $y(i):=x(i+1)-x(i),~\mathcal{R}:={\rm diag}\left\{R, \,3R\right\}$, $\Pi:=\!\Big[x^{\top}(\beta)\!-\!x^{\top}(\alpha), x^{\top}(\beta)\!+\!x^{\top}(\alpha)\!- \!\sum \limits_{i=\alpha}^{\beta} \frac{x^{\top}(i)}{\beta-\alpha+1}\Big]^{\top}$.
\end{Lemma}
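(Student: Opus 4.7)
The plan is to reduce the inequality to a discrete Wirtinger--Bessel summation bound and then introduce the free matrix $N$ via a standard completion-of-squares step.

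First I would invoke (or establish) the discrete Wirtinger--Bessel summation inequality: for $y(i) = x(i+1) - x(i)$ and $R \succ 0$,
\begin{equation*}
\sum_{i=\alpha}^{\beta-1} y^{\top}(i) R\, y(i) \;\ge\; \frac{1}{\beta-\alpha}\,\Pi^{\top} \mathcal{R}\, \Pi.
\end{equation*}
Here the top block of $\Pi$ is the telescoping sum $\sum_{i=\alpha}^{\beta-1} y(i) = x(\beta) - x(\alpha)$, and the bottom block is, after Abel summation-by-parts, the first discrete Legendre ``moment'' of $y$ re-expressed in terms of $x(\alpha)$, $x(\beta)$ and the discrete mean; the factor $3$ inside $\mathcal{R} = {\rm diag}\{R, 3R\}$ is the inverse squared norm of the corresponding orthonormal discrete polynomial. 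Negating gives
\begin{equation*}
-\sum_{i=\alpha}^{\beta-1} y^{\top}(i) R\, y(i) \;\le\; -\frac{1}{\beta-\alpha}\,\Pi^{\top} \mathcal{R}\, \Pi.
\end{equation*}

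Next, I would introduce the free matrix $N$ via the trivial non-negativity of a squared norm,
\begin{equation*}
0 \;\le\; \Bigl\|\tfrac{1}{\sqrt{\beta-\alpha}}\mathcal{R}^{1/2}\Pi + \sqrt{\beta-\alpha}\,\mathcal{R}^{-1/2} N^{\top}\vartheta\Bigr\|^{2}.
\end{equation*}
Expanding the square and rearranging yields
\begin{equation*}
-\tfrac{1}{\beta-\alpha}\Pi^{\top}\mathcal{R}\Pi \;\le\; (\beta-\alpha)\,\vartheta^{\top} N \mathcal{R}^{-1} N^{\top}\vartheta + {\rm Sym}\{\vartheta^{\top} N \Pi\}.
\end{equation*}
Chaining this with the negated Wirtinger bound above produces exactly the stated inequality.

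The only genuinely non-routine step is the Wirtinger--Bessel inequality itself, which hinges on projecting $y$ onto the first two orthogonal discrete Legendre polynomials on the grid $\{\alpha,\ldots,\beta-1\}$, applying Abel summation to recast the second projection as the mean-shifted quantity $x(\beta) + x(\alpha) - \tfrac{1}{\beta-\alpha+1}\sum_{i=\alpha}^{\beta} x(i)$, and invoking Bessel's inequality $\sum y^{\top} R y \ge \sum_{\ell=0}^{1} \langle y, P_\ell\rangle_{R}^{2}/\|P_\ell\|^{2}$. Getting the constants $1$ and $3$ to come out matched with the precise form of $\Pi$ is the main bookkeeping obstacle; everything else---the negation, the completion of squares, and the recombination---is standard. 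This makes Lemma \ref{lemma:summation} the discrete-time counterpart of the Wirtinger-based integral inequality used in the continuous-time looped-functional framework of \cite{Seuret2012}, which explains why it slots naturally into the DLF analysis developed in the paper.
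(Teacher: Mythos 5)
Your two-step strategy---first a Bessel/Wirtinger-type projection bound $\sum_{i=\alpha}^{\beta-1} y^{\top}(i)Ry(i)\ge\frac{1}{\beta-\alpha}\Pi^{\top}\mathcal{R}\Pi$, then a completion of squares to introduce the free matrix $N$---is the standard derivation of free-matrix-based summation inequalities, and the completion-of-squares half is airtight. The paper itself gives no proof (it cites the lemma as a special case of Chen et al.'s Lemma 2), so yours is a genuine derivation rather than a deviation from the paper's route.

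However, the step you defer as ``the main bookkeeping obstacle'' is precisely where the statement as printed breaks down, so your step 1 is false for the $\Pi$ you (and the lemma) use. Take $n=1$, $R=1$, $\alpha=0$, $\beta=1$: then the second block of $\Pi$ is $x(1)+x(0)-\tfrac{1}{2}(x(0)+x(1))=\tfrac{1}{2}(x(0)+x(1))$, and the claimed Wirtinger bound reads $(x(1)-x(0))^{2}\ge (x(1)-x(0))^{2}+\tfrac{3}{4}(x(0)+x(1))^{2}$, which fails whenever $x(0)+x(1)\neq 0$. The second block of $\Pi$ must carry a factor $2$ on the average, i.e.\ $x(\beta)+x(\alpha)-\frac{2}{\beta-\alpha+1}\sum_{i=\alpha}^{\beta}x(i)$; this appears to be a typo in the lemma statement, and the corrected form is what the paper actually uses in Theorem 1, where $\Pi_9$ and $\Pi_{10}$ contain $-2L_5$ and $-2L_6$. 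With that correction your orthogonal-polynomial argument does close: the projection onto the first-order discrete Legendre polynomial produces the sharp weight $3\frac{\beta-\alpha+1}{\beta-\alpha-1}R$, which dominates $3R$, so the weaker $\mathcal{R}={\rm diag}\{R,3R\}$ is admissible (your ``inverse squared norm equals $3$'' is only the limiting value). In short: right approach, but the constant-matching cannot be waved at---carrying it out would have exposed the missing factor of $2$, and as written your chain asserts a false intermediate inequality.
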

Lemma \ref{lemma:summation} can be cast as a special case of \cite[Lemma 2]{Chen2017summation}, whose proof is omitted here. Based on Lemmas \ref{lemma:DLF} and \ref{lemma:summation}, we have the following model-based stability condition.

\begin{Theorem}[\emph {Model-based condition}]\label{Th1}
For given scalars $\bar{h}>\underline{h}>0$, $ \sigma_{1}\geq0$, $\sigma_{2}\geq0$, $\lambda>0$, and $\theta> 0$ satisfying $1-\lambda-\frac{1}{\theta}\geq0$, asymptotic
stability of system \eqref{sys:sampling} is achieved under the triggering condition \eqref{sys:trigger}, and $\eta({\tau_j^k})$ converges to the origin for $\eta(0)\geq 0$, if there exist matrices $P\succ0$, $R_1\succ0$, $R_2\succ0$, $\Omega\succ0$,
$S$, $N_1$, $N_2$, $F$, such that the following LMIs hold for $h\in\{\underline{h}, \bar{h}\}$
\begin{align}{\label {Th1:LMI1}}
&\left[
  \begin{array}{ccc}
    \Xi_0+h\Xi_\varsigma+\Psi+\mathcal{O}  & hN_\varsigma\\
    \ast & -h\mathcal{R}_\varsigma
  \end{array}
\right]\prec0,~\varsigma=1,2
%{\label {Th1:LMI2}}
%&\left[
% \begin{array}{ccc}
%    \Xi_0+h\Xi_b+\Psi  & hM\\
%    \ast & -h\mathcal{R}_2
%  \end{array}
%  \right]\prec0
\end{align}
where $\Psi:={\rm Sym}\big\{F(AL_1+BKL_{7}-L_{2})\big\}$, and
\begin{align*}
\Xi_0&:={\rm Sym}\Big\{\Pi_1^{\top} S \Pi_2 \!-\! \Pi_3^{\top} S \Pi_4 \!+\! N_1\Pi_9 \!+\! N_2 \Pi_{10} \Big\}\!+\!L_2^{\top} P L_2\\
&~~~~- L_1^{\top} P L_1+(L_2-L_1)^{\top} (R_2-R_1) (L_2-L_1)\\
\Xi_1&:={\rm Sym}\Big\{\Pi_5^{\top} S \Pi_6 \Big\}+(L_2-L_1)^{\top} R_2(L_2-L_1)\\
\Xi_2&:={\rm Sym}\Big\{\Pi_7^{\top} S \Pi_8 \Big\}+(L_2-L_1)^{\top} R_1(L_2-L_1)\\
%%
%\Psi&:={\rm Sym}\big\{F(AL_1+BKL_{7}-L_{2})\big\}\\
\mathcal{O}&:=\sigma_1 L_3^{\top} \Omega L_3+
\sigma_2 L_{7}^{\top}\Omega L_{7} - (L_3-L_{7})^{\top} \Omega(L_3-L_{7})\\
\mathcal{R}_1&:={\rm diag}\left\{R_1, \,3R_1\right\},~
\mathcal{R}_2:= {\rm diag}\left\{R_2, \,3R_2\right\}\cr
\Pi_1&:=\left[L_3^{\top},\, L_{4}^{\top},\,L_{2}^{\top}-L_3^{\top},\,L_{5}^{\top}+L_2^{\top}-L_3^{\top}\right]^{\top}\\
\Pi_2&:=\left[-L_3^{\top},\, -L_4^{\top},\, L_4^{\top}-L_2^{\top}, \,L_6^{\top}-L_1^{\top}-L_4^{\top}\right]^{\top}\\
\Pi_3&:=\left[L_0^{\top},\, L_0^{\top},\, L_1^{\top}-L_3^{\top}, \,L_5^{\top}-L_3^{\top}\right]^{\top}\\
\Pi_4&:=\left[L_0^{\top}, \,L_0^{\top},\, L_4^{\top}-L_1^{\top},\, L_6^{\top}-L_4^{\top}\right]^{\top}\\
\Pi_5&:=\left[L_3^{\top}, \,L_4^{\top},\, L_0^{\top},\, L_5^{\top}\right]\\
\Pi_6&:=\left[-L_3^{\top}, \,-L_4^{\top},\, L_1^{\top}-L_2^{\top},\, -L_1^{\top}\right]^{\top}\\
\Pi_7&:=\left[L_3^{\top}, \,L_4^{\top},\, L_2^{\top}-L_1^{\top},\, L_2^{\top}\right]^{\top}\\
\Pi_8&:=\left[L_3^{\top}, \,L_4^{\top},\, L_0^{\top},\, L_6^{\top}\right]^{\top}\\
\Pi_9&:=\left[L_1^{\top}-L_3^{\top},\, L_1^{\top}+L_3^{\top}-2L_5^{\top}\right]^{\top}\\
\Pi_{10}&:=\left[L_4^{\top}-L_1^{\top}, \,L_4^{\top}+L_1^{\top}-2L_6^{\top}\right]^{\top}\\
L_i&:=\left[0_{n\times (i-1)n}, \,I_n, \,0_{n\times (7-i)n} \right], \;(i=1, 2,\ldots,7)\\
L_0&:=0_{n\times7n}.
\end{align*}
\end{Theorem}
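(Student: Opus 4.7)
The plan is to apply Lemma \ref{lemma:DLF} with $V_a(x)=x^{\top}Px$ and a DLF $V_l$ designed so that the increment $\Delta V$ on each inter-sample interval $[\tau_j^k,\tau_{j+1}^k-1]$ can be expressed as a single quadratic form $\zeta^{\top}(t)\,\Xi(t,h)\,\zeta(t)$ in an augmented vector
\[
\zeta(t):=\bigl[x^{\top}(t),\,x^{\top}(t+1),\,x^{\top}(\tau_j^k),\,x^{\top}(\tau_{j+1}^k),\,x^{\top}(t_k),\,\mu_1^{\top},\,\mu_2^{\top}\bigr]^{\top},
\]
where $\mu_1,\mu_2$ denote the sampled-mean vectors that show up when invoking Lemma \ref{lemma:summation}. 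The selector matrices $L_0,\ldots,L_7$ pick out these blocks (so that $L_1\zeta=x(t)$, $L_2\zeta=x(t+1)$, $L_3\zeta=x(\tau_j^k)$, etc.), and the composite matrices $\Pi_1,\ldots,\Pi_{10}$ encode precisely the cross-products that arise from the Lyapunov and looped contributions. For $V_l$ I would take a sum of three pieces: an $S$-weighted cross-term built from the differences $x(t)-x(\tau_j^k)$ and $x(\tau_{j+1}^k)-x(t)$ that vanishes at $t=\tau_j^k$ and $t=\tau_{j+1}^k$ (so the looping identity \eqref{def:inequality} holds), plus forward and backward summation terms weighted by $R_1$ and $R_2$ that also close at the endpoints.

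Telescoping $V_a$ gives the $L_2^{\top}PL_2-L_1^{\top}PL_1$ term in $\Xi_0$. Differencing the $S$-cross term produces the $\mathrm{Sym}\{\Pi_1^{\top}S\Pi_2-\Pi_3^{\top}S\Pi_4\}$ block in $\Xi_0$ together with boundary contributions $\mathrm{Sym}\{\Pi_5^{\top}S\Pi_6\}$, $\mathrm{Sym}\{\Pi_7^{\top}S\Pi_8\}$ which become $\Xi_1$ and $\Xi_2$; differencing the $R_i$-weighted summations yields the $(L_2-L_1)^{\top}(R_2-R_1)(L_2-L_1)$ contribution in $\Xi_0$ and summation residuals that I would bound via Lemma \ref{lemma:summation} using slack matrices $N_1,N_2$. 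This is what introduces the $hN_\varsigma \mathcal{R}_\varsigma^{-1}N_\varsigma^{\top}$ term that reappears after a Schur complement in \eqref{Th1:LMI1}.

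Two structural additions close the argument. First, between transmissions the rule \eqref{sys:trigger} forces $\eta(\tau_j^k)+\theta\rho(\tau_j^k)\ge 0$; adding $\tfrac{1}{\theta}\eta(\tau_j^k)+\rho(\tau_j^k)\ge 0$ as an S-procedure slack contributes exactly the block $\mathcal{O}$ (the three $\Omega$-terms in the definition of $\rho$), while the $\eta$-dependence is harmless because Lemma \ref{lemma:nonneg.dynam} ensures $\eta\ge 0$. Second, the plant equation $x(t+1)=Ax(t)+BKx(t_k)$ is encoded by multiplying by a free matrix $F$ and its transpose via Finsler's lemma, producing $\Psi=\mathrm{Sym}\{F(AL_1+BKL_7-L_2)\}$. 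After these substitutions the combined quadratic form is \emph{affine} in the realized inter-sampling length $h\in\{\underline{h},\ldots,\bar{h}\}$, so by convexity negative-definiteness at the two extremes $h=\underline{h}$ and $h=\bar{h}$ implies it throughout, giving \eqref{Th1:LMI1}. Lemma \ref{lemma:DLF} then yields $\Delta V_a<0$ at the sampling grid and hence asymptotic stability of \eqref{sys:sampling}; convergence $\eta(\tau_j^k)\to 0$ follows from \eqref{sys:dynamic} because $\rho(\tau_j^k)\to 0$ as $x\to 0$ while $\eta$ stays non-negative.

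The main obstacle is to cook up a DLF whose discrete differences factor exactly into the $\Pi_i$ pattern stated in the theorem without leaving residual terms that would block the reduction to \eqref{Th1:LMI1}; in particular, the $S$-cross term has to satisfy the looping identity \eqref{def:inequality} at both endpoints simultaneously and, when differenced at running index $t$, the $t$-dependence must cancel so that only $h=\tau_{j+1}^k-\tau_j^k$ survives. Once this algebraic bookkeeping is correct, incorporating the triggering slack and the Schur complement for $N_\varsigma$ is routine.
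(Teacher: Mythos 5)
Your proposal is correct and follows essentially the same route as the paper: the same quadratic $V_a$, the same three-piece DLF (an $S$-weighted cross term plus $R_1$- and $R_2$-weighted summation terms), the summation inequality of Lemma \ref{lemma:summation} with slacks $N_1,N_2$, the descriptor/Finsler encoding of the dynamics via $F$, the triggering condition and Lemma \ref{lemma:nonneg.dynam} supplying the $\mathcal{O}$ block, and convexity in $h$ plus a Schur complement to reach \eqref{Th1:LMI1}. The only cosmetic differences are your ordering of the augmented vector (the paper places $x(t_k)$ in the seventh block, matching $L_7$) and your S-procedure bookkeeping for $\eta$, whereas the paper folds $\eta(t)$ directly into $V$ and uses the difference equation \eqref{sys:dynamic}; both lead to the same LMIs.
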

\begin{proof}
Considering the augmented system state $(x,\eta)$ for $t \in \mathbb{N}_{[{\tau_j^k}, {\tau_{j+1}^k}-1]}$ %that satisfy $[t_k,\,t_{k+1}-1]=\bigcup \limits_{j=0}^{j=m} [{\tau_j^k}, {\tau_{j+1}^k}-1]$
with $j\in \mathbb{N}_{[0,m_k]}$ and $k\in \mathbb{N}$,
%,
%where ${\tau_j^k}:=t_k+jh$, $j\leq m\in\mathbb{N}$, $k\in\mathbb{N}$. % $m=\frac{t_{k+1}-t_k}{h}-1$.
we choose %the following functional %for system \eqref{sys:sampling}
\begin{align}{\label {Th1:Vt}}
V(x,t)=V_a(x(t))+V_l(x(t),t)+
\eta(t)
%\!t\left[\eta({\tau_{j+1}^k})-\eta({\tau_j^k})\right]
\end{align}
where $V_a(x(t))=x^{\top}(t)Px(t)$, $P \succ 0$;   $\eta(t)$ is given in \eqref{sys:dynamic} and $\eta({\tau_j^k})\geq0$ due to Lemma \ref{lemma:nonneg.dynam};
moreover, $V_l(x(t),t)$ is a novel DLF given as
\begin{equation}\label{Th1:Wl}
%\begin{aligned}
V_l(x(t),t)=\sum \limits_{i=1}^{3}V_{li}(t)
%\end{aligned}
\end{equation}
with $V_{l1}(t):=2\phi_1^{\top}(t)S \phi_2(t)$, and
\begin{align*}
%V_{l1}(t)&:=2\phi_1^{\top}(t)S \phi_2(t),\\
V_{l2}(t)&:=({\tau_{j+1}^k}-t)\Big[\sum \limits_{i={\tau_j^k}}^{t} y^{\top}(i)R_1y(i) -y^{\top}(t)R_1y(t) \Big], \\
V_{l3}(t)&:=(t-{\tau_j^k})\Big[\sum \limits_{i=t}^{{\tau_{j+1}^k}} y^{\top}(i)R_2y(i) -y^{\top}(t)R_2y(t) \Big],
\end{align*}
where $S$, $R_1\succ 0$, $R_2\succ 0$; and, $y(i):=x(i+1)-x(i)$, $\phi_0:=\big[x^{\top}({{\tau_j^k}})$, $x^{\top}({{\tau_{j+1}^k}})\big]^{\top}$, $\phi_1(t):=\big[(t-{\tau_j^k})\phi_0^{\top}$, $x^{\top}(t)-x^{\top}({{\tau_j^k}}), \sum \limits_{i={\tau_j^k}}^{t}x^{\top}(i)-x^{\top}({\tau_j^k})\big]^{\top}$, $\phi_2(t):=\big[({\tau_{j+1}^k}-t)\phi_0^{\top}$, $x^{\top}({\tau_{j+1}^k})-x^{\top}(t),\sum \limits_{i=t}^{{\tau_{j+1}^k}}x^{\top}(i)-x^{\top}({\tau_{j+1}^k})\big]^{\top}$.
%\begin{equation*}
%\begin{aligned}
%y(i)&:=x(i+1)-x(i),~\phi_0:=\Big[x^{\top}({{\tau_j^k}}),\,x^{\top}({{\tau_{j+1}^k}})\Big]^{\top}\\
%\phi_1(t)&:=\!\!\Bigg[\!(t-{\tau_j^k})\phi_0^{\top}, x^{\top}(t)\!-\!x^{\top}({{\tau_j^k}}), \sum \limits_{i={\tau_j^k}}^{t}x^{\top}(i)\!-\!x^{\top}({\tau_j^k})\!\Bigg]^{\top}\\
%\phi_2(t)&:=\!\!\Bigg[({\tau_{j+1}^k}-t)\phi_0^{\top}, x^{\top}({\tau_{j+1}^k})-x^{\top}(t),\ldots\\
%&~~~~~~~~~~~~~~~~~~~~~~~~~~~~~\ldots,\sum \limits_{i=t}^{{\tau_{j+1}^k}}x^{\top}(i)-x^{\top}({\tau_{j+1}^k})\Bigg]^{\top}.
%\end{aligned}
%\end{equation*}

According to Lemma \ref{lemma:DLF}, we calculate the forward difference  $\Delta V(x,t):= V(x(t+1),t+1)-V(x(t),t)$, yielding
\begin{align}{\label {Th1:Vd}}
\Delta V(x,t)= \Delta V_a(t)+ \Delta V_l(t)+ \Delta\eta(t)
\end{align}
where $\Delta\eta(t)=\eta(t+1)-\eta(t)$, %for $t \in \mathbb{N}_{[{\tau_j^k}, {\tau_{j+1}^k}-1]}$,
and
\begin{align*}
\Delta V_a(t)&=\xi^{\top}(t)\big(L_2^{\top} P L_2 - L_1^{\top} P L_1 \big)\xi(t),\\
\Delta V_{l1}(t)&=2\xi^{\top}(t)\big[\Pi_1^{\top} S \Pi_2 - \Pi_3^{\top} S \Pi_4+(t-{\tau_j^k})\Pi_5^{\top} S \Pi_6 \\
&~~~~~~~~~~~~~+ ({\tau_{j+1}^k}-t)\Pi_7^{\top} S \Pi_8 \big]\xi(t),\\
\Delta V_{l2}(t)&=\xi^{\top}(t)\big[({\tau_{j+1}^k}\!-\!t\!-\!1)(L_2\!-\!L_1)^{\top} R_1(L_2\!-\!L_1)\big] \xi(t)\\
&~~~-\sum \limits_{i={\tau_j^k}}^{t-1} y^{\top}(i)R_1y(i),\\
\Delta V_{l3}(t)&=\xi^{\top}(t)\big[(t-{\tau_j^k}+1)(L_2-L_1)^{\top} R_2(L_2-L_1)\big] \xi(t)\\
&~~~-\sum \limits_{i=t}^{{\tau_{j+1}^k}-1} y^{\top}(i)R_1y(i),
\end{align*}
where the notation $\xi(t)$ is given as $\xi(t):=\big[x^{\top}(t)$, $x^{\top}(t+1)$, $x^{\top}({\tau_j^k}), x^{\top}({\tau_{j+1}^k}), \sum \limits_{i={\tau_j^k}}^{t}\frac{x^{\top}(i)}{t-{\tau_j^k}+1},
\sum \limits_{i=t}^{{\tau_{j+1}^k}}\frac{x^{\top}(i)}{{\tau_{j+1}^k}-t+1}, x^{\top}(t_k)\big]^{\top}.$
%\begin{align*}
%\xi(t):=\Bigg[&x^{\top}(t), x^{\top}(t+1), x^{\top}({\tau_j^k}), x^{\top}({\tau_{j+1}^k}), \sum \limits_{i={\tau_j^k}}^{t}\frac{x^{\top}(i)}{t-{\tau_j^k}+1},\\
%&\sum \limits_{i=t}^{{\tau_{j+1}^k}}\frac{x^{\top}(i)}{{\tau_{j+1}^k}-t+1}, x^{\top}(t_k)\Bigg]^{\top}.
%\end{align*}

By Lemma \ref{lemma:summation}, the summation terms \eqref{Th1:Vd} satisfy
\begin{equation}
\begin{aligned}
-&\sum \limits_{i={\tau_j^k}}^{t-1} y^{\top}(i)R_1y(i)
-\sum \limits_{i=t}^{{\tau_{j+1}^k}-1} y^{\top}(i)R_2y(i)\leq \\
&~\xi^{\top}(t)\big[(t-{\tau_j^k})N_1 \mathcal{R}_1^{-1}N_1^{\top}+({\tau_{j+1}^k}-t)N_2 \mathcal{R}_2^{-1}N_2^{\top}\\
&~~~~~~~+2N_1\Pi_9+2N_2\Pi_{10}\big]\xi(t).
%\end{equation}
%\begin{equation}
%\begin{aligned}
%-&\sum \limits_{i=t}^{{\tau_{j+1}^k}-1} y^{\top}(i)R_2y(i)\leq \\
%&\xi^{\top}(t)\Big[({\tau_{j+1}^k}-t)N_2 \mathcal{R}_2^{-1}N_2^{\top}+2N_2\Pi_{10}\Big]\xi(t).
\end{aligned}
\end{equation}

%where $N$ and $M$ are fixed matrices of suitable dimension.
Through the descriptor method \cite{Fridman2010}, the model-based system representation \eqref{sys:sampling} can be written as, for $t \in \mathbb{N}_{[{\tau_j^k}, {\tau_{j+1}^k}-1]}$
\begin{align}{\label {Th1:zero}}
0%&=2\xi^{\top}(t)F \big[A x(t)+BKx(t_k)- x(t+1) \big]\notag\\
=2\xi^{\top}(t)F \big(AL_1+BKL_{7}-L_{2}\big)\xi(t),
\end{align}
where $F$ is a fixed matrix of dimensions $7n\times n$.

In light of \eqref{sys:trigger}, when currently sampled data are not
transmitted, Lemma \ref{lemma:nonneg.dynam} asserts that $\eta({\tau_j^k})\geq0$ for $\lambda>0$, $\eta(0)\ge0$, and $\theta>0$ satisfying $1-\lambda-\frac{1}{\theta}\geq0$. Hence, from \eqref{sys:dynamic} and \eqref{trigger:function}, it holds that
\begin{equation}\label{th1:trigger}
\eta({\tau_j^k})-\eta({\tau_{j+1}^k})=-\lambda\eta({\tau_j^k})+\rho({\tau_j^k}) \leq  \xi^{\top}(t) \mathcal{O} \xi(t). %-\lambda\eta({\tau_j^k})+\rho({\tau_j^k})
\end{equation}

Consequently, we have that by summing up \eqref{Th1:Vd}-\eqref{th1:trigger}
 \begin{align}%\label{Th1:sum:trigger}
\Delta & V(x,t)+\eta({\tau_j^k})-\eta({\tau_{j+1}^k})-\Delta\eta(t)\notag\\
&\leq \xi^{\top}(t)\left[\frac{t-{\tau_j^k}}{h}\Upsilon_1(h) +\frac{{\tau_{j+1}^k}-t}{h}\Upsilon_2(h) \right]\xi(t) \label {Th1:sum}
\end{align}
where $\Upsilon_\varsigma(h)=\Xi_0+\Psi+\mathcal{O}+h\Xi_{\varsigma} +h N_\varsigma \mathcal{R}_\varsigma^{-1} N_\varsigma^{\top}$ for $\varsigma=1,2.$

Using the Schur Complement Lemma, it can be deduced that $\Upsilon_1(h)\prec0$ and $\Upsilon_2(h)\prec0$ are equivalent
to the LMIs in \eqref{Th1:LMI1},  which are affine, and consequently convex, with respect to $h$. Thus, LMIs \eqref{Th1:LMI1} at the vertices of $h\in[\underline{h},\bar{h}]$ ensure $\Delta V(x,t)+\eta({\tau_j^k})-\eta({\tau_{j+1}^k})-\Delta\eta(t)<0$ for all $h\in[\underline{h},\bar{h}]$.
It follows from Lemma \ref{lemma:DLF} that
\begin{equation}{\label {Th1:vj}}
\sum \limits_{t={\tau_j^k}}^{{\tau_{j+1}^k}-1} \Big[\Delta V(x,t)+ \eta({\tau_{j+1}^k})-\eta({\tau_j^k})-\Delta\eta(t) \Big]<0,\forall x({\tau_j^k})\neq 0
\end{equation}
%Since $[t_k,\,t_{k+1}-1]=\bigcup \limits_{j=0}^{j=m} [{\tau_j^k}, {\tau_{j+1}^k}-1]$, it is derived that
%\begin{equation}
%V_a(t_{k+1})<V_a(t_k), ~k\in \mathbb{N},~\forall x(t_k)\neq 0.
%\end{equation}
which ensures $V_a({\tau_{j+1}^k})+(h-1)\eta({\tau_{j+1}^k})<V_a({\tau_j^k})+(h-1)\eta({\tau_j^k})$, $\forall j\in \mathbb{N}_{[0,m_k]}$ and $k\in \mathbb{N}$.
We conclude that System \eqref{sys:sampling} and the dynamic variable $\eta({\tau_j^k})$ converge to the origin asymptotically under our
transmission scheme when $k \rightarrow \infty$, since $V_a(t)>0$, $\eta({\tau_j^k})>0$, and $x(t)$ is bounded during $t\in \mathbb{N}_{[{\tau_j^k}, {\tau_{j+1}^k}-1]}$, thereby completing the proof.
\end{proof}

\begin{Remark}[\emph {Discussion}]
\emph{Theorem \ref{Th1} provides a stability criterion for the discrete-time system \eqref{sys:sampling} under the dynamic triggering scheme \eqref{sys:trigger} using the DLF approach in Lemma \ref{lemma:DLF}.
The latter is a discrete-time version of the looped-functional approach \cite{Seuret2012} that has been applied to deduce less conservative stability conditions.
In \cite{Hu2016,Liu2015}, related stability criteria have been presented for the event-triggered control of discrete-time systems. However, under these results, the lower bound of the inter-execution interval is $t_{k+1}-t_k=1$.
Recently, a switching dynamic event-triggered control method for discrete-time systems was proposed by \cite{Ding2020}, where a guaranteed lower bound larger than $1$, i.e., $t_{k+1}-t_k>1$, is beneficial for reducing the amount of transmissions.
Our periodic-sampling-based dynamic ETS \eqref{sys:trigger} explained in Section \ref{subsec:event} only requires that $t_{k+1}-t_k\geq h$, where $h$ is the periodic sampling interval. %,
Based on Theorem \ref{Th1}, %derived from an appropriate looped-functional, an $h$ as large as possible can
%be determined, thus to saving communication resources.
 we can determine a possibly large value of $h$ leading to stability, thus saving communication resources.}

\end{Remark}
\begin{Remark}[\emph {DLF}]
\emph{A special case of DLF in the form of quadratic matrix functions is employed in \eqref{Th1:Wl}. It can be proven that
$V_l(x({\tau_j^k}),{\tau_j^k})=V_l(x({\tau_{j+1}^k}),{\tau_{j+1}^k})=0$, which satisfies
condition \eqref{def:inequality} in Definition \ref{def:DLF}.
In \eqref{Th1:Wl}, only the system state and its single-summation terms %$\sum \limits_{i={\tau_j^k}}^{t}\frac{x^{\top}(i)}{t-{\tau_j^k}+1}$ and
%$\sum \limits_{i=t}^{{\tau_{j+1}^k}}\frac{x^{\top}(i)}{{\tau_{j+1}^k}-t+1}$
are considered. Less conservative stability criteria can be derived if higher-order summation terms, e.g., multiple-summation of the system state, are included. This is left for future research.}
\end{Remark}

\subsection{Data-based stability analysis and controller design}\label{design:event}
We now derive a data-based stability certificate for the event-triggered control system \eqref{sys:sampling} with \emph{unknown} system matrices $A$ and $B$, as well as a data-driven method for co-designing the controller gain $K$ and the triggering matrix $\Omega$.
Motivated by \cite{Berberich2020},
the main idea is to employ a system expression using the data $\{x(T)\}^{\rho}_{T=0}$, $\{u(T)\}^{\rho-1}_{T=0}$ to replace the model-based representation in \eqref{sec1:sys:disLTI}.
Following this line,
the data-based system representation in Lemma \ref{Lemma:system:data}, combined with the model-based
stability condition in Theorem \ref{Th1}, is employed to obtain a data-based stability condition.
%In this section, we provide a method for co-designing $K$ and $\Omega$.
%% based on Theorem \ref{Th:ADET}.
We begin with an algebraically equivalent system to \eqref{sys:sampling}.

Assume that $G\in \mathbb{R}^{n \times n}$ is nonsingular, and let $x(t)=Gz(t)$. The system \eqref{sys:sampling} is restructured as follows
\begin{equation}\label{Design:NCS}
z(t+1)=G^{-1}AG z(t)+G^{-1}BK_c z(t_k)
\end{equation}
for $t\in \mathbb{N}_{[t_k, t_{k+1}-1]}$, where $K_c:=KG$.
The system \eqref{Design:NCS} exhibits the same stability behavior as \eqref{sys:sampling}, and the triggering condition \eqref{sys:trigger}
remains effective.
Based on Theorem \ref{Th1}, we have the following theoretical result.
\begin{Theorem}
[\emph {Data-driven condition and controller
design}]\label{Th2}
For given the same scalars $\bar{h}>\underline{h}>0$, $ \sigma_{1}\geq0$, $\sigma_{2}\geq0$, $\epsilon$, $\lambda>0$, and $\theta>0$ satisfying $1-\lambda-\frac{1}{\theta}\geq0$, there exists a controller gain $K$ such that asymptotic stability of  system \eqref{sys:sampling} is achieved under the triggering condition \eqref{sys:trigger} for any $[A ~B]\in \Sigma_{AB}$, and $\eta({\tau_j^k})$ converges to the origin, if there exist a scalar $\varepsilon>0$, and matrices $P\succ0$, $R_1\succ0$, $R_2\succ0$, $\Omega_z\succ0$,
$S$, $N_1$, $N_2$, $G$, $K_c$, such that the following LMIs hold for all $h\in\{\underline{h}, \bar{h}\}$
\begin{align}{\label {Th2:LMI1}}
&\left[
  \begin{array}{ccc}
     \mathcal{T}_1& \mathcal{T}_2+\mathcal{F}& 0\\
    \ast &  \mathcal{T}_3+\Xi_0+h\Xi_\varsigma+\bar{\Psi}+\bar{\mathcal{O}}  & hN_\varsigma\\
     \ast &  \ast & -h\mathcal{R}_\varsigma
  \end{array}
\right]\prec0
%{\label {Th2:LMI2}}
%&\left[
% \begin{array}{ccc}
%    \mathcal{T}_1& \mathcal{T}_2+\mathcal{R}& 0\\
%    \ast &  \mathcal{T}_3+\Xi_0+h\Xi_b+\bar{\Psi}  & hM\\
%    \ast & \ast & -h\mathcal{R}_2
%  \end{array}
%  \right]\prec0
\end{align}
where $\varsigma=1,2$, $\bar{\Psi}:={\rm Sym}\big\{-\mathcal{D}GL_{2}\big\}$, and
\begin{align*}
%\bar{\Psi}&:={\rm Sym}\big\{-\mathcal{D}GL_{2}\big\}\\
\bar{\mathcal{O}}&:=\sigma_1 L_3^{\top} \Omega_z L_3+
\sigma_2 L_{7}^{\top}\Omega_z L_{7}- (L_3-L_{7})^{\top} \Omega_z(L_3-L_{7})\\
\mathcal{D}&:=(L_1+\epsilon L_2)^{\top},~\mathcal{F}:=\big[L_1^{\top}G^{\top},\,L_{7}^{\top}K_c^{\top} \big]^{\top}\\
\mathcal{T}_1&:=\varepsilon\mathcal{V}_1\Theta_{AB}\mathcal{V}_1^{\top},~
\mathcal{T}_2:=\varepsilon\mathcal{V}_1\Theta_{AB}\mathcal{V}_2^{\top},~
\mathcal{T}_3:=\varepsilon\mathcal{V}_2\Theta_{AB}\mathcal{V}_2^{\top}\\
\mathcal{V}_1&:=
\left[\begin{array}{ccc}I & 0\\\end{array}\right],~
\mathcal{V}_2:=
\left[\begin{array}{ccc} 0& \mathcal{D}\\\end{array}\right].
\end{align*}
Moreover, $K=K_cG^{-1}$ and $\Omega={G^{-1}}^{\top} \Omega_z G^{-1}$ are desired the controller gain $K$ and triggering matrix.
\end{Theorem}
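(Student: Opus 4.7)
The plan is to lift Theorem \ref{Th1} to the transformed system \eqref{Design:NCS} and then eliminate the unknown pair $[A~B]$ by invoking the data-based parametrization in Lemma \ref{Lemma:system:data}. First I would apply Theorem \ref{Th1} to \eqref{Design:NCS}, whose effective matrices are $G^{-1}AG$, $G^{-1}B$ with controller gain $K_c$. A congruence transformation by $\mathrm{diag}(G,\dots,G)$ together with the renamings $P\leftarrow G^{\top}PG$, $R_i\leftarrow G^{\top}R_iG$, $\Omega_z = G^{\top}\Omega G$, $K_c = KG$, etc., brings the feasibility condition into a form whose only remaining dependence on $[A~B]$ enters through a single affine term. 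The freedom in the descriptor slack matrix $F$ of \eqref{Th1:zero} is then used decisively: I would pick $F$ so that after the congruence it equals $\mathcal{D}^{\top} = L_1 + \epsilon L_2$, with $\epsilon$ playing exactly the role of the scalar in \eqref{Th2:LMI1}. With this choice, the $A,B$-free part of $\Psi$ collapses to $\bar{\Psi} = \mathrm{Sym}\{-\mathcal{D}GL_{2}\}$ and the coupling term becomes linear in $[A~B]$, precisely $\mathrm{Sym}\{\mathcal{D}^{\top}[A~B]\mathcal{F}\}$ with $\mathcal{F}$ as defined in the theorem.

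Second, I would recast the stability inequality obtained from Theorem \ref{Th1} in the schematic form
\begin{equation*}
\Phi(h,\varsigma) + \mathrm{Sym}\bigl\{\mathcal{D}^{\top}[A~B]\mathcal{F}\bigr\} \prec 0,
\end{equation*}
where $\Phi(h,\varsigma)$ collects $\Xi_0 + h\Xi_\varsigma + \bar{\Psi} + \bar{\mathcal{O}}$ together with the Schur-completed summation term $hN_\varsigma\mathcal{R}_\varsigma^{-1}N_\varsigma^{\top}$, and does not depend on $[A~B]$. By Lemma \ref{Lemma:system:data}, membership $[A~B]\in\Sigma_{AB}$ is equivalent to a single QMI in $[A~B]$ with data-defined matrix $\Theta_{AB}$. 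A standard S-procedure with nonnegative multiplier $\varepsilon>0$ then converts ``the inequality above holds for all $[A~B]\in\Sigma_{AB}$'' into the robust version
\begin{equation*}
\Phi(h,\varsigma) + \mathrm{Sym}\bigl\{\mathcal{D}^{\top}[A~B]\mathcal{F}\bigr\} - \varepsilon \begin{bmatrix}[A~B]^{\top}\\ I\end{bmatrix}^{\top}\Theta_{AB}\begin{bmatrix}[A~B]^{\top}\\ I\end{bmatrix} \prec 0 \quad \forall[A~B].
\end{equation*}

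Third, I would view the left-hand side as a quadratic form in $[A~B]$ and dualize it via a Schur complement. The quadratic, linear, and constant parts in $[A~B]$ line up, respectively, with $\mathcal{T}_1 = \varepsilon\mathcal{V}_1\Theta_{AB}\mathcal{V}_1^{\top}$ in the leading block, with $\mathcal{T}_2+\mathcal{F} = \varepsilon\mathcal{V}_1\Theta_{AB}\mathcal{V}_2^{\top} + \mathcal{F}$ in the off-diagonal block (the descriptor coupling contributing the $\mathcal{F}$ summand and the S-procedure cross-term contributing $\mathcal{T}_2$), and with $\mathcal{T}_3 + \Xi_0 + h\Xi_\varsigma + \bar{\Psi} + \bar{\mathcal{O}}$ in the middle diagonal block. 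A second Schur complement on the $hN_\varsigma\mathcal{R}_\varsigma^{-1}N_\varsigma^{\top}$ piece produces the third block row and column involving $N_\varsigma$ and $-h\mathcal{R}_\varsigma$, yielding exactly \eqref{Th2:LMI1} for $h\in\{\underline{h},\bar{h}\}$. Convexity of the resulting LMI in $h$ then extends feasibility to the whole interval $[\underline{h},\bar{h}]$, and Theorem \ref{Th1}, applied in the $z$-coordinates, delivers asymptotic stability for every admissible $[A~B]$; the gains $K = K_cG^{-1}$ and $\Omega = G^{-\top}\Omega_zG^{-1}$ are recovered directly from the decision variables.

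The main obstacle I expect is not the S-procedure, which is mechanical once the coupling term is isolated, but the first step: choosing the descriptor slack $F$ and organizing the $\mathrm{diag}(G,\dots,G)$-congruence so that every occurrence of $G^{-1}$ in the transformed matrices $G^{-1}AG$, $G^{-1}B$ cancels and only the clean combination $[A~B]\mathcal{F}$ survives, while simultaneously aligning the selector matrices $\mathcal{V}_1,\mathcal{V}_2$ with the $(n+m)$- and $n$-dimensional partition of $\Theta_{AB}$. Keeping the book on which slack variables are allowed to remain free (e.g., $S$, $N_1$, $N_2$) versus which must be structurally tied to $G$ is the delicate part; after that the LMI \eqref{Th2:LMI1} falls out by two successive Schur complements.
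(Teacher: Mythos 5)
Your proposal follows essentially the same route as the paper's proof: work in the $z$-coordinates of \eqref{Design:NCS}, reuse the DLF/descriptor machinery of Theorem \ref{Th1} with the slack fixed to $\mathcal{D}=(L_1+\epsilon L_2)^{\top}$ so that the $[A~B]$-dependence appears only through the affine coupling $\mathrm{Sym}\{\mathcal{D}(AGL_1+BK_cL_7)\}$, then apply the full-block S-procedure with the QMI of Lemma \ref{Lemma:system:data} and two Schur complements to arrive at \eqref{Th2:LMI1}. The only caveat is a sign slip in your intermediate display: since the constraint defining $\Sigma_{AB}$ is $[\cdot]^{\top}\Theta_{AB}[\cdot]\succeq 0$, the multiplier must enter as $+\varepsilon$ (as in the blocks $\mathcal{T}_1,\mathcal{T}_2,\mathcal{T}_3$ of the final LMI, which you state correctly), not $-\varepsilon$.
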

\begin{proof}
Replacing $x$ of the functional in \eqref{Th1:Vt} with the state $z$, the following functional is built for the system \eqref{Design:NCS}
%We choose a new functional $V(z,t)$ for the system \eqref{Design:NCS}, where $V(z,t)$ is chosen as in \eqref{Th1:Vt}.
\begin{align}{\label {Th2:Vt}}
V(z,t)=\!V_a(z(t))+V_l(z(t),t)+\eta(t).
\end{align}

Using the descriptor method again, the system \eqref{Design:NCS} can be represented as
\begin{align}{\label {Th:controller:AB:zero}}
0%&=2\big[z(t)+\epsilon z(t+1)\big]^{\top} \notag\\
%&~~~\times G\big[G^{-1}AG z(t)+G^{-1}BK_cz(t_k)-z(t+1) \big]\notag \\
=2\xi_z^{\top}(t)\mathcal{D}(AGL_1+BK_cL_7-GL_2)\xi_z(t)
\end{align}
where $\epsilon$ is a given constant and $\xi_z(t):=\big[z^{\top}(t)$, $z^{\top}(t+1), z^{\top}({\tau_j^k}), z^{\top}({\tau_{j+1}^k}),
\sum \limits_{i={\tau_j^k}}^{t}\frac{z^{\top}(i)}{t-{\tau_j^k}+1},
\sum \limits_{i=t}^{{\tau_{j+1}^k}}\frac{z^{\top}(i)}{{\tau_{j+1}^k}-t+1}, z^{\top}(t_k)\big]^{\top}$.

Triggering condition \eqref{sys:trigger} and Lemma \ref{lemma:nonneg.dynam} prove that $\eta({\tau_{j+1}^k})-\eta({\tau_j^k}) \leq  \xi^{\top}(t) \mathcal{O} \xi(t)$ as in \eqref{th1:trigger}, which directly ensures the following inequality with $x(t)=Gz(t)$
 \begin{align}\label{Th4:Vd}
\eta({\tau_{j+1}^k})-\eta({\tau_j^k}) \leq  \xi_z^{\top}(t) \bar{\mathcal{O}} \xi_z(t).
\end{align}

Imitating \eqref{Th1:sum}, %Summing up \eqref{Th2:sum}-\eqref{Th4:Vd},
it can be obtained that
 \begin{align*}%\label{Th1:sum:trigger}
\Delta &V(z,t)+\eta({\tau_j^k})-\eta({\tau_{j+1}^k})-\Delta\eta(t)\\
&\leq \xi_z^{\top}(t)\left[\frac{t-{\tau_j^k}}{h}\bar{\Upsilon}_1(h) +\frac{{\tau_{j+1}^k}-t}{h}\bar{\Upsilon}_2(h) \right]\xi_z(t)
\end{align*}
where $\bar{\Upsilon}_\varsigma(h):={\rm Sym}\big\{\mathcal{D}(AGL_1+BK_cL_{7})\big\}+\Xi_0+h\Xi_\varsigma+\bar{\Psi}+\bar{\mathcal{O}}+
h N_\varsigma \mathcal{R}_\varsigma^{-1} N_\varsigma^{\top}$ for  $\varsigma=1,2$.
%\begin{align*}
%\bar{\Upsilon}_i(h)&:={\rm Sym}\big\{\mathcal{D}(AGL_1+BK_cL_{7})\big\}+{\Upsilon}_1(h)-{\Psi}+\bar{\Psi}\\
%\bar{\Upsilon}_2(h)&:={\rm Sym}\big\{\mathcal{D}(AGL_1+BK_cL_{7})\big\}+{\Upsilon}_2(h)-{\Psi}+\bar{\Psi}.
%\end{align*}
Then, the terms $\bar{\Upsilon}_1(h)$ and $\bar{\Upsilon}_2(h)$ are restructured as follows
\begin{align*}%\label{Th4:decomposition1}
\bar{\Upsilon}_\varsigma(h)&:=
\left[\begin{array}{cc}[\mathcal{D}A~\mathcal{D}B]^{\top}\\I \\\end{array}\right]^{\top}\\
&~~~~~\times\left[\begin{array}{cc}0 & \mathcal{F}\\\ast & \Xi_0+h\Xi_\varsigma+\bar{\Psi}+\bar{\mathcal{O}}+
hN_\varsigma \mathcal{R}_\varsigma^{-1} N_\varsigma^{\top} \\\end{array}\right]\!
\left[\cdot\right].
%\label{Th4:decomposition2}
%&\bar{\Upsilon}_2(h):=
%\left[\begin{array}{cc}[\mathcal{D}A~\mathcal{D}B]^{\top}\\I \\\end{array}\right]^{\top}
%\left[\begin{array}{cc}0 & \mathcal{F}\\\ast & \Upsilon_2(h)-\Psi +\check{\Psi} \\\end{array}\right]
%\left[\cdot\right].
\end{align*}

According to the data-based representation in Lemma \ref{Lemma:system:data}, it holds for any $[A ~ B]\in\Sigma_{AB}$ that
\begin{equation}
\begin{aligned}
\left[\!\begin{array}{cc}[A~B]^{\top}\\I \\\end{array}\!\right]^{\top}
  \!\Theta_{AB}\!
  \left[\!\begin{array}{cc}[A~B]^{\top} \\
   I\\\end{array}\!\right]\succeq0.
\end{aligned}
\end{equation}
By the full-block S-procedure \cite{Sche2001}, we have $\bar{\Upsilon}_1(h)\prec0$ and $\bar{\Upsilon}_2(h)\prec0$ for any $[A ~B]\in\Sigma_{AB}$ if there exists a scalar $\varepsilon>0$ such that for $\varsigma=1,2$
 \begin{align}\label{Th2:fullblock1}
&\left[\begin{array}{cc}0 & \mathcal{F}\\\ast & \Xi_0+h\Xi_\varsigma+\bar{\Psi}+\bar{\mathcal{O}}
+hN_\varsigma \mathcal{R}_\varsigma^{-1} N_\varsigma^{\top}\\\end{array}\right]\notag\\
&~~~~~~~~~~~~~~~~~~+
\varepsilon \left[\begin{array}{cc}\mathcal{V}_1\Theta_{AB}\mathcal{V}_1^{\top} & \mathcal{V}_1\Theta_{AB}\mathcal{V}_2^{\top}\\\ast &  \mathcal{V}_2\Theta_{AB}\mathcal{V}_2^{\top} \\\end{array}\right] \prec0.
%\\ \label{Th2:fullblock2}
%&\left[\begin{array}{cc}0 & \mathcal{F}\\\ast & \Upsilon_2(h)-\Psi +\check{\Psi} \\\end{array}\right]+
%\varepsilon \left[\begin{array}{cc}\mathcal{V}_1\Theta_{AB}\mathcal{V}_1^{\top} & \mathcal{V}_1\Theta_{AB}\mathcal{V}_2^{\top}\\\ast &  \mathcal{V}_2\Theta_{AB}\mathcal{V}_2^{\top} \\\end{array}\right]\notag\\
%&\prec0.
\end{align}

Finally, the Schur Complement Lemma ensures that the inequalities in \eqref{Th2:fullblock1} are equivalent to LMIs in \eqref{Th2:LMI1}. Similar to Theorem \ref{Th1}, we have a conclusion that LMIs \eqref{Th2:LMI1} at the vertices of $h\in[\underline{h},\bar{h}]$ are sufficient stability conditions for system \eqref{Design:NCS}
under the triggering condition \eqref{sys:trigger} for any
$[A ~B]\in \Sigma_{AB}$, and that $\eta({\tau_j^k})$ converges to the origin.
Since $G$ is nonsingular, system \eqref{Design:NCS} exhibits the same stability behavior as \eqref{sys:sampling}.
\end{proof}

\begin{Remark}[\emph {Model-based design under ETS}]\label{remark:model:design}
\emph{A model-based co-design method under the ETS \eqref{sys:trigger} can be derived
by replacing \eqref{Th2:LMI1} with the condition $\bar{\Upsilon}_\varsigma(h)\prec0$ for $\varsigma=1,2$.}
\end{Remark}

\begin{Remark}[\emph {Novelty}]
\emph{Up to date, there are few research efforts devoted to data-driven control of discrete-time systems under aperiodic sampling, %(see, e.g., \cite{wildhagen2021datadriven,wildhagen2021improved}),
in particular
data-driven control under event-triggering schemes. Theorem \ref{Th2}, which is based on Theorem \ref{Th1},
provides a stability condition and a
co-design method of the controller and the ETS for unknown sampled-data control systems. A possibly large sampling interval $h$ and a
triggering matrix $\Omega$ for the triggering condition \eqref{sys:trigger} can be
searched for %with a given controller gain K
by using Theorem
\ref{Th2}.
%For event-triggered systems, a larger $h$ leads to a smaller
%transmission frequency, which saves network transmission resources.
The application of Theorem \ref{Th2} is simple, requiring only a solution of LMIs which can be constructed based on noisy data.
Besides,
when $\theta$ approaches infinity, $\sigma_1=0$, and $\sigma_2=0$, the triggering condition \eqref{sys:trigger} degenerates to the periodic sampling scheme studied in \cite{wildhagen2021datadriven,wildhagen2021improved}.
%By comparing with the results in \cite{wildhagen2021datadriven,wildhagen2021improved}, Theorem\ref{Th2} is able to provide larger $h$,
In comparison to the results in \cite{wildhagen2021datadriven,wildhagen2021improved}, Theorem \ref{Th2} provides larger values of $h$,
which shows the role of the DLF in reducing conservatism of stability conditions. The comparison results are listed in Table \ref{Tab:noise:givenK} for a numerical example.}
\end{Remark}

\section{Self-Triggered Control}\label{sec:self}
As highlighted in Section \ref{sec:introduction}, STS does not rely on extra hardware to continuously monitor the system states \cite{Wan2021self},
but rather predicts the next sampling instant based on a local function and previous data. %without further communication. %therefore reducing the resource occupation compared with ETS \cite{Tabuada2007}.
In this section, we study data-driven STS with \emph {unknown} matrices $A$ and $B$, as depicted in Fig. \ref{FIG:structure:self}. The challenge here is to predict the next transmission instant using the already transmitted system states and historical noisy measurements (i.e., $\{x(T)\}^{\rho}_{T=0}$, $\{u(T)\}^{\rho-1}_{T=0}$ as in Section \ref{Sec:preliminaries}) without explicit knowledge of the system matrices $A$ and $B$. To this end, we begin by designing a model-based STS.%, where the matrices $A$ and $B$ are assumed to be known.
%we exploit that discrete-time aperiodically sampled systems can be viewed as switched systems, which is a well-known fact in the literature \cite{Yu2004}.
\subsection{Model-based STS}
In order to apply data-driven control arguments similar to those in Section \ref{sec:eventtrigger}, we first need to define a lifted version of the original system \eqref{sys:sampling} as suggested in \cite{wildhagen2021datadriven}. To this end, let us define for $s>0$, $s\in \mathbb{N}$
\begin{equation*}
\begin{aligned}
\underline{B}^s&:=
\left[\begin{array}{cccc}A^{s-1}B&A^{s-2}B&\cdots &B \\\end{array}\right],\\
%\underline{B}_w^s&:=[A^{s-1}B_w~~A^{s-2}B_w~\cdots B_w]\\
\underline{K}^s&:=\Big[~\underbrace{K^{\top} ~~ K^{\top} ~~\cdots ~~K^{\top}}_{s~ \text{times}}\Big]^{\top}.
\end{aligned}
\end{equation*}
We exploit that discrete-time sampled-data systems can be viewed as switched systems, which is a well-known fact in the literature \cite{Yu2004} %The sampled-data system in \eqref{sys:sampling} can be rewritten as
%a discrete-time switched system
\begin{equation}\label{sys:switch}
x(t_k+s_k)=(A^{s_k}+\underline{B}^{s_k} \underline{K}^{s_k}) x(t_k),~~s_k\in \mathbb{N}_{[1, \,\bar{s}]}
\end{equation}
where $s_k=t_{k+1}-t_{k}$ and $\bar{s}>1\in \mathbb{N}$.

\begin{figure}[t]
	\centering
		 \includegraphics[scale=0.5]{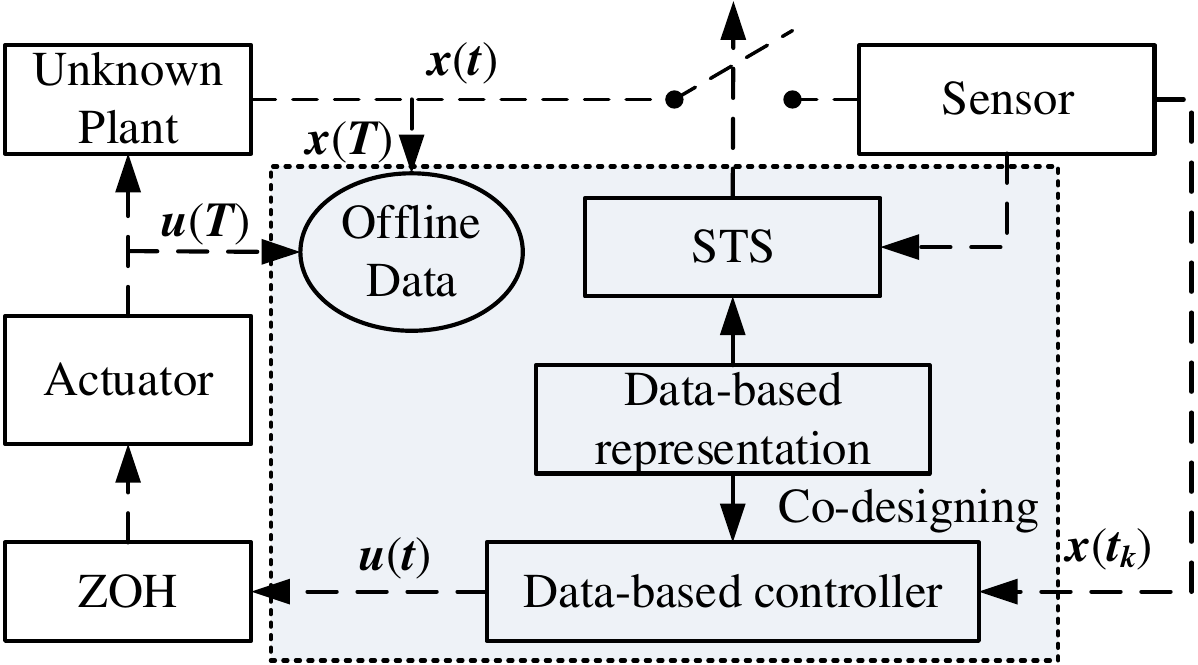}
	\caption{Structure of data-driven discrete-time systems under STS.}
	\label{FIG:structure:self}
\end{figure}

\begin{figure}
	\centering
\includegraphics[scale=0.4]{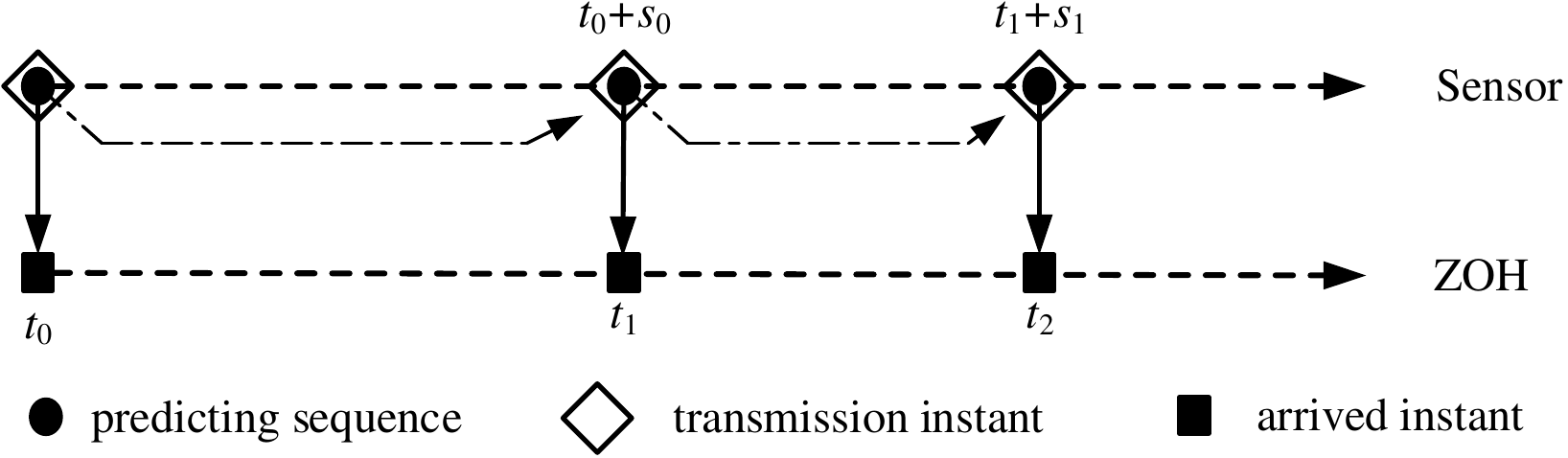}
	\caption{Evolution of transmission series.}
	\label{FIG:evolution:STC}
\end{figure}

The idea of the proposed STS is to build a function $\Gamma(x(t_k))$ for computing the next transmission instant $t_{k+1}$ based on the current state $x(t_k)$ of system \eqref{sys:sampling}
\begin{equation}\label{self:function}
t_{k+1}=t_k+\Gamma(x(t_k),s_k).
\end{equation}
We employ the following condition to find  $\Gamma(x(t_k),s_k)$
\begin{align}
&\sigma_1 x^{\top}(t_k\!+\!s_k)\Omega x(t_k\!+\!s_k)\!+\!\sigma_2 x^{\top}(t_k)\Omega x(t_k) \!-\!e^{\top}(s_k)\Omega e(s_k)\notag\\
\label{sys:self:condition}
&\geq 0
\end{align}
where
$\Omega\succ0$ is some weight matrix; $\sigma_1$ and $\sigma_2$ are parameters to be designed; $e(s_k):=x(t_k+s_k)-x(t_k)$ denotes the error between the sampled signals $x(t_k)$ at the latest transmission
instant and $x(t_k+s_k)$ at time $t_k+s_k$.
According to \eqref{sys:switch}, the condition \eqref{sys:self:condition}
can be reformulated in the form of a QMI
\begin{align}\label{sys:QMI}
\mathcal{Q}(x(t_k),s_k)= &
\left[\!\!\begin{array}{cc}(A^{s_k}+\underline{B}^{s_k}\underline{K}^{s_k}) x(t_k)\\x(t_k) \\\end{array}\!\!\right]^{\top}  \notag \\
&\times \left[\!\!\begin{array}{cc}(\sigma_1-1)\Omega & \!\!\Omega\\\ast & \!\!(\sigma_2-1)\Omega \\\end{array}\!\!\right][\cdot]\geq 0.
%  \left[\!\!\begin{array}{cc}(A^{s_k}+\underline{B}^{s_k}\underline{K}^{s_k}) x(t_k)\\x(t_k) \\\end{array}\!\!\right]\notag\\
%  <0
\end{align}
If \eqref{sys:QMI} is satisfied, the time $t_k+s_k$ is declared to be the next transmission instant, i.e., $t_{k+1}=t_k+s_k$. When the sampled state $x(t_k+s_k)$ is transmitted to the controller, a ZOH is used to maintain it within the
interval $[t_{k+1}, t_{k+2}-1]$. Simultaneously,
the self-triggering module is updated and employed to predict the next transmission instant $t_{k+2}$.
Consequently, the function $\Gamma(x(t_k))$ is designed to be
\begin{equation}\label{sys:self:function}
\Gamma(x(t_k))=\max_{s_k\in \mathbb{N}}\Big\{s_k>0\Big|\mathcal{Q}(x(t_k),s_k)\geq 0\Big\}.
\end{equation}
In Fig. \ref{FIG:evolution:STC}, an example illustrating the STS is given.
The next transmission instant is predicted using the current transmitted measurement; that is, $t_{k+1}$ is determined by the system state at time $t_k$.

\begin{Remark}[\emph{Relationship between ETS and STS}]\label{ETStoSTS}
%[\emph {Relationship between ETS and STS}]
\emph{The STS condition in \eqref{sys:self:condition} is a static
triggering scheme that is similar to \cite{Liu2015,Tripathy2017Discrete}, discussed in
Remark \ref{general}.
When the parameter $\theta$ approaches infinity and $jh=s_k$, the ETS condition in \eqref{sys:trigger} boils down to
\eqref{sys:self:condition}.
Both STS and ETS are based on the current transmitted signal $x(t_k)$. However, the next transmission instant in STS is
determined by predicted states, while the ETS uses currently sampled data. %A data-driven predicting method is given in the next subsection.
Extending the above STS by considering a dynamic triggering scheme is an interesting issue for future research.}
\end{Remark}

\subsection{Data-driven STS}
%Note that the self-triggering function $\Gamma(x(t_k))$ in \eqref{sys:self:function} is based on
Although the system matrices $A$ and $B$ are \emph{unknown},
the measurements $\{x(T)\}^{\rho+s}_{T=0}$, $\{u(T)\}^{\rho+s-1}_{T=0}$ $(T,\rho,s\in\mathbb{N},\rho>0,s>0)$ of the perturbed system \eqref{sys:data:perturbed} are available. Based on the model-based function in \eqref{sys:self:function},
a data-based discrete-time STS is proposed. %  based on the model-based one.
Our idea is to rebuild a self-triggering function using the data $\{x(T)\}^{\rho+s}_{T=0}$, $\{u(T)\}^{\rho+s-1}_{T=0}$ to replace the $[A^{s}~\underline{B}^{s}]$-based representation \eqref{sys:self:function}.
To that end, we recall the data-driven parametrization of the lifted matrix $[A^s~\underline{B}^{s}]$ in \cite{wildhagen2021datadriven}. Similar to the system expression in \eqref{sys:switch}, we firstly re-write the perturbed system \eqref{sys:data:perturbed} as follows %for  $T\in \mathbb{N}$ as follows
\begin{align}\label{sys:data:perturbed:self}
&x(T+s)=A^s x(T)+\underline{B}^s
\left[\begin{array}{cc}
    u(T) \\
    \vdots \\
    u(T+s-1)\\
  \end{array}\right]\notag \\
  &~~~~~~~~~+
  \left[\!\begin{array}{ccc}
    A^{s-1}B_w & \cdots & B_w
  \end{array}\!\right]
  \left[\!\begin{array}{cc}
    w(T) \\
    \vdots \\
    w(T+s-1)\\
  \end{array}\!\right].
\end{align}
Recall that the measured data $\{x(T)\}^{\rho+s-1}_{T=0}$, $\{u(T)\}^{\rho+s-2}_{T=0}$ are corrupted by the {\it unknown} noise  $\{w(T)\}^{\rho+s-2}_{T=0}$.
Let us define the following matrices containing the measurements
\begin{align*}
X_+^s &:=\left[\begin{array}{cccc}x(s)&x(1+s)&\cdots &x(\rho+s-1) \\\end{array}\right]\\
U^s&:=\left[\begin{array}{cccc}
    u(0)&u(1)&\cdots& u(\rho-1) \\
    \vdots& \vdots& & \vdots\\
    u(s-1)&u(s)&\cdots& u(\rho+s-2)\\
  \end{array}\right].
  \end{align*}
We further define the following lifted disturbance
  \begin{align*}
W^1&:=\left[\begin{array}{cccc}
    w(0)&w(1)&\cdots& w(\rho-1)
     \end{array}\right]\\
\underline{W}^s&:=
\left[\begin{array}{cccc}
    w(0)&\cdots& w(\rho-1) \\
    \vdots& & \vdots\\
    w(s-1)&\cdots& w(\rho+s-2)\\
     \end{array}\right]\\
    W^s&:=\left[\begin{array}{ccc}
    A^{s-1}B_w & \cdots &B_w
  \end{array}\right] \underline{W}^s,~~\text {\it{for}}~ s> 1.
\end{align*}
Then, it is clear that
\begin{align}\label{formu:data:self}
X_+^s = A^s X +\underline{B}^s U^s +B^s_w W^s
\end{align}
where $B^1_w:=B_w$, $B^s_w:=I$ for $s>1$.
Similar to Assumption \ref{Ass:disturbance}, we make the following assumption on the noise. %which is analogous to Assumption \ref{Ass:disturbance}. %that comes from \cite{wildhagen2021datadriven}.
\begin{Assumption}[\emph {Lifted noise bound}]\label{Ass:disturbance:self}
\emph{The noise sequence $\{w(t)\}^{\rho+s-2}_{t=0}$ collected in the matrix $W^s$ satisfies $W^s\in\mathcal{W}^s$ with}
\begin{equation*}
\begin{aligned}%\label{data:disturbance:self}
\mathcal{W}^s=\bigg\{W^s\in\mathbb{R}^{n_w^s\times\rho} \Big |
\left[\!\begin{array}{cc}{W^s}^{\top}\\I \\\end{array}\!\right]^{\top}\!\!
  \left[\!\begin{array}{cc}Q_d^s\! & \!S_d^s\\\ast\! & \!R_d^s \\\end{array}\!\right]\!\!
  \left[\!\begin{array}{cc}{W^s}^{\top}\\I \\\end{array}\!\right]\succeq0 \bigg\}
\end{aligned}
\end{equation*}
\emph{for some known matrices $Q_d^s \prec 0 \in \mathbb{R}^{\rho\times \rho}$, $S_d^s \in \mathbb{R}^{\rho\times n_w^s}$, and $R_d^s={R_d^{s}}^{\top} \in \mathbb{R}^{n_w^s\times n_w^s}$, where
$n_w^1:=n_w$, $n_w^s:=n$ for $s>1$.}
\end{Assumption}

Define the set of all pairs $[A^s~\underline{B}^s]$ consistent with the model \eqref{formu:data:self} and Assumption \ref{Ass:disturbance:self} as the same as \cite{wildhagen2021datadriven}
\begin{align}\label{sys:data:self}
\Sigma_{AB}^s&:=\{[A^s~ \underline{B}^s]\in \mathbb{R}^{n \times (n+sm)} \mid \notag \\
 &~~~~~ X_+^s=A^sX+\underline{B}^sU^s+B_w^s W^s,~ W^s\in \mathcal{W}^s\}.
\end{align}
%An equivalent expression of $\Sigma_{AB}$ in \eqref{sys:data} in the form of QMI is obtained below.
Analogously to Lemma \ref{Lemma:system:data}, we obtain the following equivalent expression of $\Sigma_{AB}^s$ in the form of a QMI
%\begin{Lemma}[\emph {Lifted data-based representation}] \label{Lemma:system:data:self}
%The set $\Sigma_{AB}^s$ in \eqref{sys:data:self} is equal to
\begin{equation}\label{data:represent:self}
\begin{aligned}
\Sigma_{AB}^s=~&\bigg\{[A^s~ \underline{B}^s]\in \mathbb{R}^{n \times (n+sm)} \Big | \\
&~\left[\begin{array}{cc}[A^s~ \underline{B}^s]^{\top}\\I \\\end{array}\right]^{\top}
  \Theta_{AB}^s
  \left[\begin{array}{cc}[A^s~ \underline{B}^s]^{\top} \\
   I\\\end{array}\!\right]\succeq0
\bigg\}
\end{aligned}
\end{equation}
where
\begin{align*}
\Theta_{AB}^s=\left[\begin{array}{cc}Q_c^s & S_c^s\\\ast & R_c^s \\\end{array}\right]:=
\left[\begin{array}{cc}-X & 0 \\ -U^s & 0 \\ \hline X_+^s & B_w^s \\\end{array}\right]
  \left[\begin{array}{cc}Q_d^s & S_d^s\\\ast & R_d^s \\\end{array}\right]
  [\cdot]^{\top}.
  \end{align*}
Having obtained a data-based representation of system \eqref{sys:switch}, we can now translate the
model-based self-triggering function  \eqref{sys:self:function} that depends on $[A^{s}~\underline{B}^{s}]$  to a data-based one.
The following technical assumption on the matrix $\Theta_{AB}^s$ is required for the subsequent derivation.

\begin{Assumption}\label{Ass:datamatrix}
\emph{The matrix $\Theta_{AB}^s$ is invertible and has $n_w$ positive eigenvalues.}
\end{Assumption}
In practice, Assumptions \ref{Ass:datamatrix} is satisfied when the data are sufficiently rich and $B_w$ is invertible \cite{Berberich2020}.
Based on Assumption \ref{Ass:datamatrix}, we have the following theorem.

%$s_k\leq s$

\begin{Theorem}[\emph{Data-driven self-triggering condition}]\label{data:self:scheme}
For given scalars $\sigma_{1}\geq0$, $\sigma_{2}\geq0$, matrix $\Omega\succ0$, controller gain $K$, and $x(t_k)$ from  system \eqref{sys:switch}, $\mathcal{Q}(x(t_k),s)$ in \eqref{sys:QMI} satisfies
\begin{equation}\label{th3:Q}
\mathcal{Q}(x(t_k),s)\geq0
\end{equation}
for any $[A^{s}~\underline{B}^{s}]\in \Sigma_{AB}^{s}$, if there exists a scalar $\gamma>0$, such that the following LMI holds for some  $s\in\mathbb{N}, s\geq1$
\begin{equation}\label{Th3:LMI}
\tilde{\mathcal{Q}}(x(t_k))-\gamma \tilde{\mathcal{G}}^{s}(x(t_k)) \succeq 0
\end{equation}
where
\begin{align}
\tilde{\mathcal{Q}}(x(t_k))&:=\left[\begin{array}{cc} I & 0\\0 &x^{\top}(t_k) \\\end{array}\right]
\!\left[\!\!\begin{array}{cc}(\sigma_1-1)\Omega & \!\!\Omega\\\ast & \!\!(\sigma_2-1)\Omega \\\end{array}\!\!\right]\!\![\cdot]^{\top}\notag\\
\tilde{\mathcal{G}}^{s}(x(t_k))&:= \left[\begin{array}{ccc} I & 0 & 0\\0 &x^{\top}(t_k)& x^{\top}(t_k)\underline{K}^{s\top}\\\end{array}\right] \tilde{\Theta}_{AB}^{s} [\cdot]^{\top}\notag\\
\tilde{\Theta}_{AB}^{s}&:=\left[\!\!\begin{array}{cc}-\tilde{R}_c^{s} & {\tilde{S}_c^{s\top}}\\\ast &  -\tilde{Q}_c^{s} \\\end{array}\!\!\right],\left[\!\begin{array}{cc}\tilde{Q}_c^s & \tilde{S}_c^{s}\\\ast & \tilde{R}_c^{s} \\\end{array}\!\right]:=
\left[\!\begin{array}{cc}\!Q_c^{s} & S_c^{s}\!\\ \!\ast & R_c^{s}\! \\\end{array}\!\right]^{-1}\notag
\end{align}
\end{Theorem}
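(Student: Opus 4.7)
The plan is to reduce the set-valued inequality \eqref{th3:Q} to a single quadratic form evaluated at one specific vector, and then invoke the dualization lemma so that consistency with the data (the QMI in \eqref{data:represent:self}) can be translated into positivity of the auxiliary form $\tilde{\mathcal{G}}^{s}(x(t_k))$; the claim will then follow immediately by multiplying the LMI \eqref{Th3:LMI} by this vector from both sides.

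As a first step I introduce the abbreviations $Z:=[A^{s}~\underline{B}^{s}]$ and $M:=[x^{\top}(t_k),\,x^{\top}(t_k)\underline{K}^{s\top}]^{\top}$, so that by \eqref{sys:switch} we have $(A^{s}+\underline{B}^{s}\underline{K}^{s})x(t_k)=ZM$. I would then expand the $2n\times 2n$ quadratic form \eqref{sys:QMI} and check by direct computation that, with $v^{*}:=[(ZM)^{\top},\,1]^{\top}\in\mathbb{R}^{n+1}$,
\[
\mathcal{Q}(x(t_k),s)=v^{*\top}\tilde{\mathcal{Q}}(x(t_k))\,v^{*},
\]
which matches the definition of $\tilde{\mathcal{Q}}(x(t_k))$ in the statement. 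Consequently, left/right multiplication of the LMI \eqref{Th3:LMI} by $v^{*\top}$ and $v^{*}$ reduces the whole theorem to proving the single scalar inequality $v^{*\top}\tilde{\mathcal{G}}^{s}(x(t_k))\,v^{*}\ge 0$ for every admissible $Z\in\Sigma_{AB}^{s}$.

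To establish this residual inequality I would compute $v^{*\top}\tilde{\mathcal{G}}^{s}(x(t_k))v^{*}$ directly, observing that the outer matrix in $\tilde{\mathcal{G}}^{s}$ satisfies $\mathcal{A}^{\top}v^{*}=[(ZM)^{\top},M^{\top}]^{\top}$; rearranging the $\tilde{Q}_c^{s},\tilde{S}_c^{s},\tilde{R}_c^{s}$ blocks of $\tilde{\Theta}_{AB}^{s}$ yields
\[
v^{*\top}\tilde{\mathcal{G}}^{s}(x(t_k))v^{*}=-M^{\top}\begin{bmatrix}-I\\ Z\end{bmatrix}^{\top}(\Theta_{AB}^{s})^{-1}\begin{bmatrix}-I\\ Z\end{bmatrix}M.
\]
At this point I invoke the dualization lemma (cf.\ \cite{Waarde2020}): under Assumption~\ref{Ass:datamatrix}, the defining QMI of $\Sigma_{AB}^{s}$ in \eqref{data:represent:self} is equivalent to
\[
\begin{bmatrix}-I\\ Z\end{bmatrix}^{\top}(\Theta_{AB}^{s})^{-1}\begin{bmatrix}-I\\ Z\end{bmatrix}\preceq 0,
\]
so the right-hand side of the previous display is non-negative. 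Combined with the reduction in the previous paragraph, this yields $\mathcal{Q}(x(t_k),s)\ge\gamma\,v^{*\top}\tilde{\mathcal{G}}^{s}(x(t_k))v^{*}\ge 0$ for every $[A^{s}~\underline{B}^{s}]\in\Sigma_{AB}^{s}$, which is \eqref{th3:Q}.

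The routine work lies in the two quadratic-form expansions that identify $\tilde{\mathcal{Q}}$ and produce the inverse-$\Theta_{AB}^{s}$ representation of $\tilde{\mathcal{G}}^{s}$. The main obstacle is the bookkeeping around the dualization step: one must confirm that the inertia requirement furnished by Assumption~\ref{Ass:datamatrix} (invertibility with exactly $n_{w}$ positive eigenvalues) matches the hypothesis of the dualization lemma, and verify that the sign- and transpose-flips encoded in the definition of $\tilde{\Theta}_{AB}^{s}$ reproduce the pattern obtained from $(\Theta_{AB}^{s})^{-1}$. Once these are in place, no S-procedure multiplier beyond the scalar $\gamma$ is needed, because the final implication only has to hold at the single vector $v^{*}$ rather than on an entire cone.
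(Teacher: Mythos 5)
Your proposal is correct and follows essentially the same route as the paper: rewrite $\mathcal{Q}(x(t_k),s)$ as a quadratic form in $\tilde{\mathcal{Q}}(x(t_k))$ evaluated at $v^*=[((A^s+\underline{B}^s\underline{K}^s)x(t_k))^{\top},\,1]^{\top}$, apply the dualization lemma under Assumption~\ref{Ass:datamatrix} to turn the data-consistency QMI into $\bigl[\begin{smallmatrix}Z\\I\end{smallmatrix}\bigr]^{\top}\tilde{\Theta}_{AB}^{s}\bigl[\begin{smallmatrix}Z\\I\end{smallmatrix}\bigr]\succeq 0$, and combine the two through the multiplier $\gamma$. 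Your observation that the final step reduces to a pointwise congruence with $v^*$ (rather than a full-block S-procedure over a cone) is a valid, slightly more explicit rendering of the same argument the paper carries out.
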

\begin{proof}
The matrix $\mathcal{Q}(x(t_k),s)$ in \eqref{sys:QMI} is rewritten as
\begin{align}\label{sys:QMI:proof}
\mathcal{Q}(x(t_k),s)=
\left[\!\!\begin{array}{cc}(A^{s}+\underline{B}^{s} \underline{K}^{s}) x(t_k)\\I \\\end{array}\!\!\right]^{\top} \tilde{\mathcal{Q}}(x(t_k)) [\cdot].
\end{align}
Applying the dualization lemma \cite[Lemma 4.9]{Scherer2000} to the system representation in \eqref{data:represent:self} under Assumption \ref{Ass:datamatrix}, it can be proven that $[A^{s}~\underline{B}^{s}]\in \Sigma_{AB}^{s}$ if and only if
\begin{align}
\left[\!\!\begin{array}{cc}[A^{s}~\underline{B}^{s}]\\I \\\end{array}\!\!\right]^{\top}
\tilde{\Theta}_{AB}^s
\left[\!\!\begin{array}{cc}[A^{s}~\underline{B}^{s}]\\I \\\end{array}\!\!\right]
\succeq0.
\end{align}
Immediately, through the full-block S-procedure \cite{Sche2001}, we have $\tilde{Q}(x(t_k)) \succeq 0$ for any $[A^{s}~\underline{B}^{s}]\in \Sigma_{AB}^{s}$ if there exists a scalar $\gamma>0$ such that the LMI \eqref{Th3:LMI} holds.
End the proof.
\end{proof}

It is possible to derive LMI-based conditions for guaranteeing stability of both the model-based and the data-driven STS approach, and to co-design the controller gain and the triggering matrix. These results are omitted for space reasons.

\begin{Remark}[\emph{Explanation of the data-driven STS condition}]
\emph{Theorem \ref{data:self:scheme} offers a data-driven triggering condition based on the model-based one in
\eqref{sys:self:function}.
The key idea is that we
leverage the data-based representation in \eqref{data:represent:self} to robustly verify the STS condition for all $[A^s~ \underline{B}^s]$ consistent with the data.
As a result, the model-based triggering STS function in \eqref{sys:self:function} is translated into a
data-driven one as follows
\begin{equation}\label{sys:self:function:data}
\bar{\Gamma}(x(t_k),s_k)=\max_{s_k \in \mathbb{N}}\Big\{s_k\geq1\Big|\tilde{\mathcal{Q}}(x(t_k))-\gamma \tilde{\mathcal{G}}^{s_k}(x(t_k))\succeq 0\Big\}.
\end{equation}
Overall, our data-driven STS is given by
\begin{equation}\label{self:function:data}
t_{k+1}=t_k+\bar{\Gamma}(x(t_k),s_k)
\end{equation}
under Assumptions \ref{Ass:disturbance:self} and \ref{Ass:datamatrix} for system \eqref{sys:switch}.
Note that, in Theorem \ref{data:self:scheme}, the data-driven condition $\tilde{\mathcal{Q}}(x(t_k))-\gamma \tilde{\mathcal{G}}^{s}(x(t_k)) \succeq 0$  sufficiently guarantees the model-based one $\mathcal{Q}(x(t_k),s)\geq0$ in \eqref{sys:self:function} that is consistent with system stability characteristics.
A smaller triggering interval may be produced by \eqref{sys:self:function:data}, since
\eqref{sys:self:function:data} only provides a sufficient %provides a conservative
condition for \eqref{sys:self:function}.}
\end{Remark}
%\begin{remark}[\emph{Merits of data-driven STS}]
%Unlike the ETC, the STS ,
%
%\end{remark}

\begin{Remark}[\emph{Summary of data-driven STS algorithm}]
\emph{According to \eqref{self:function:data}, the next transmission instant $t_{k+1}$ of system \eqref{sys:sampling} can be computed using only collected data $\{x(T)\}^{\rho+s-1}_{T=0}$ and $\{u(T)\}^{\rho+s-2}_{T=0}$.
Note that the matrix $\tilde{\Theta}_{AB}^{s}$ in \eqref{self:function:data} needs to be determined in advance from the given noise bound.
To that end, we recall \cite[Algorithm 1]{wildhagen2021datadriven}, which can be used to construct a lifted noise bound as in Assumption \ref{Ass:disturbance:self} based on pointwise bound $||w(T)||_2\leq\bar{w}$ for all $T=0,\,\dots,\,\rho+s-2$ with some $\bar{w}>0$. %and a lifted system parametrization from data.
This leads to a lifted system parametrization as in \eqref{data:represent:self}.
Then, we continuously check the data-driven self-triggering condition using the matrices $\tilde{\Theta}_{AB}^{s}$ from \cite[Algorithm 1]{wildhagen2021datadriven}.
The next triggering instant can be determined by checking the LMI \eqref{Th3:LMI} as soon as the current transmission instant and state become available.}
\end{Remark}

\begin{Remark}[\emph {Motivation}]\label{remark:motivation}
\emph{This paper combines the data-driven control and ETS/STS for discrete-time sampled-data systems, where the co-design problem of the controller and the triggering matrix without any knowledge of the system model is solved. %Theorems \ref{Th2} and \ref{Th6}
To the author's knowledge, the only alternative to this approach in the current literature would be system identification, e.g., least-squares estimation of the system matrices \cite{Oymak2019identification}, followed by discrete-time ETS/STS in references \cite{Ding2020,Hu2016}.
However, while the proposed approach guarantees that the closed-loop system under the designed ETS/STS is stable, such an identification-based ETS/STS does in general not provide such guarantees, in particular when the data are affected by noise.
This is due to the fact that 1) providing tight estimation bounds in system identification is in challenging, compare, e.g., \cite{Matni2019}; and 2) \cite{Ding2020,Hu2016} only provide nominal results, i.e., error bounds arising from identification based on noisy data are not handled systematically.}
\end{Remark}

\section{Example and Simulation}\label{sec:example}
In this section, two numerical examples from \cite{Zhang2001,Hu2016} are employed to certificate the effectiveness and merits of our proposed methods.  All numerical computations were performed using Matlab, together with the SeDuMi toolbox  \cite{SeDuMi}.

\begin{Example}\label{ex:1}
\emph{Consider the linear system used in \cite{Zhang2001}
\begin{equation*}
\dot{x}(\nu)=\left[
  \begin{array}{cc}
    0& 1 \\
    0 & -0.1 \\
  \end{array}
\right] x(\nu)+\left[
  \begin{array}{cc}
    0 \\
    0.1\\
  \end{array}
\right] u(\nu),~\nu\geq0.
\end{equation*}
Discretizing the system leads to
\begin{equation}\label{example:discrete:system}
x(t+1)=A(T_k) x(t)+B(T_k) u(t), ~t\in \mathbb{N}
\end{equation}
where $T_k>0$ is a discretization interval, and
\begin{equation*}
A(T_k):=e^{\left[
  \begin{array}{cc}
    0& 1 \\
    0 & -0.1 \\
  \end{array}
\right]T_k},
B(T_k):=\int_0^{T_k}e^{A(s)}{\left[
  \begin{array}{cc}
    0 \\
    0.1\\
  \end{array}
\right]}ds.
\end{equation*}
We assume that the linear sampled-data state-feedback controller $u(t)=Kx(t_k)$ is used to control the system. For $t\in \mathbb{N}_{[t_k, t_{k+1}-1]}$, the system \eqref{example:discrete:system} can be written as
\begin{equation}\label{example:sample:system}
x(t+1)=A(T_k) x(t)+B(T_k) Kx(t_k).
\end{equation}}
\end{Example}
\begin{table}[t]
\caption{Maximum allowable $\bar{h}$ with $\underline{h}=1$ for different noise bounds $\bar{w}$ and given $K=-[3.75~ 11.5]$.}
\begin{center}      % Give a unique label
\setlength{\tabcolsep}{4pt}
\renewcommand\arraystretch{1.}
\begin{tabular}{lcccccccccc}
\hline\noalign{\smallskip}
$\bar{w}$ & 0.001 & 0.002 & 0.005 & 0.01&0.02& 0.05 \\
\noalign{\smallskip}\hline\noalign{\smallskip}
\cite[Theorem 11]{wildhagen2021datadriven}& 12& 12 & 11 &11&-&-\\
\cite[Theorem 20]{wildhagen2021datadriven}& 16& 15 & 1 &-&-&-\\
\cite[Theorem 3]{wildhagen2021improved}&13 & 13& 13& 12& -&-\\
Theorem \ref{Th2}&17 &16& 15&13&12&8\\
\hline\noalign{\smallskip}
\end{tabular}
\end{center}
\label{Tab:noise:givenK}
\end{table}
\begin{table}[t]
\caption{Maximum allowable $\bar{h}$ with $\underline{h}=1$ for different noise bounds $\bar{w}$ by optimizing controller gain $K$.}
\begin{center}      % Give a unique label
\setlength{\tabcolsep}{4pt}
\renewcommand\arraystretch{1.}
\begin{tabular}{lccccccccc}
\hline\noalign{\smallskip}
$\bar{w}$ & 0.001 & 0.002 & 0.005 & 0.01&0.02& 0.05\\
\noalign{\smallskip}\hline\noalign{\smallskip}
\cite[Corollary 13]{wildhagen2021datadriven}& 15& 11 & 7 &5&3&-\\
\cite[Corollary 23]{wildhagen2021datadriven}& 19& 17 &12 &8&5&1\\
Theorem \ref{Th2} &54 &45 &34 &27 &16 &9\\
\noalign{\smallskip}\hline
\end{tabular}
\end{center}
\label{Tab:noise:unK}
\end{table}
%\subsection{Validation of the ETS for Example \ref{ex:1}}
(\emph{Upper bounds of $h$ for known $A$ and $B$}) Similar to \cite{wildhagen2021datadriven,wildhagen2021improved}, the controller gain $K=-[3.75 ~ 11.5]$ is employed.
By setting $\sigma_1=0$, $\sigma_2=0$, with $\theta\to \infty$ (that is, $\eta({\tau_j^k})=0$), the proposed triggering scheme \eqref{sys:trigger} reduces to a periodic transmission scheme. We first consider the
model-based stability analysis. Leveraging Theorem
\ref{Th1} with discretization interval $T_k=0.01$, the maximum sampling interval $\bar{h}$ leading to closed-loop stability is $\bar{h}=173$ with $\underline{h}=1$. Compared with the model-based results of $h=111$ obtained by \cite{Nag2008}, $\bar{h}=122$ by \cite{wildhagen2021datadriven}, $\bar{h}=133$ by \cite{Seuret2018discrete}, $\bar{h}=136$ by \cite{wildhagen2021improved}, and $\bar{h}=169$ by \cite{Fridman2010}, Theorem \ref{Th1} in our paper provides improvements of $55.8\%$, $41.8\%$, $30.0\%$, $27.2\%$, and $2.3\%$, respectively. This shows the merits of the proposed DLF approach in reducing the conservatism of stability conditions.

(\emph{Upper bounds of $h$ for unknown $A$ and $B$}) Next, assume that the matrices $A$ and $B$ are \emph{unknown}. We set the discretization interval as $T_k=0.1$
and generated $\rho=800$
measurements $\{x(T)\}^{\rho}_{T=0}$, $\{u(T)\}^{\rho-1}_{T=0}$, where the data-generating input was sampled uniformly from $u(T)\in [-1,~1]$.
The measured data were perturbed by a disturbance distributed uniformly over $w(T)\in [-\bar{w},\,\bar{w}]^2$ for $\bar{w}>0$.
 Such disturbance $w(T)$ fulfills Assumption \ref{Ass:disturbance} with $Q_d=-I$, $S_d=0$, and $R_d=\bar{w}^2\rho I$ $(\rho=800)$.
The matrix $B_w$ was taken as $B_w=0.01I$, which has full column rank.
Using Theorem \ref{Th2} and $K=-[3.75~ 11.5]$ and setting parameters $\sigma_1=0$, $\sigma_2=0$, and $\epsilon=2$, values of $\bar{h}$ with $\underline{h}=1$  for different realizations of $\bar{w}$ were computed and presented in Table \ref{Tab:noise:givenK}.
The results come from different approaches in \cite{wildhagen2021datadriven,wildhagen2021improved} under the same levels of disturbance are also given. According to the comparison in Table \ref{Tab:noise:givenK},
Theorem \ref{Th2} provides larger values of $\bar{h}$, %than mentioned approaches, implying that our method achieves enhanced robustness relative to
i.e., our method reduces the conservatism if compared to
\cite[Theorem 11]{wildhagen2021datadriven}, \cite[Theorem 20]{wildhagen2021datadriven}, and \cite[Theorem 3]{wildhagen2021improved}.
Furthermore, we design a controller leading to a possibly large sampling bound $\bar{h}$ with $\underline{h}=1$. By virtue of \cite[Corollary 13]{wildhagen2021datadriven}, \cite[Corollary 23]{wildhagen2021datadriven}, and Theorem \ref{Th2}, corresponding values of $\bar{h}$ were computed and listed in Table \ref{Tab:noise:unK}. Again, Theorem \ref{Th2} provides the largest $\bar{h}$, which leads to the same conclusion as by Table \ref{Tab:noise:givenK}. In the following, the proposed data-driven ETS \eqref{sys:trigger} and STS \eqref{self:function:data} are applied for system \eqref{example:sample:system}, respectively.

\subsection{Data-driven ETS and STS control}
(\emph{Data-driven ETS control}) For data-driven control of system \eqref{sys:sampling} with unknown matrices $A$ and $B$ under our transmission scheme \eqref{sys:trigger}, we now employ Theorem \ref{Th2} to co-design the controller gain $K$ and
the triggering matrix $\Omega$ using the same measurements %$\{x(T)\}^{\rho}_{T=0}$, $\{u(T)\}^{\rho-1}_{T=0}$
as above. We set the sampling interval $h=2$, triggering parameters $\sigma_1=0.5$, $\sigma_2=0.5$, $\theta=2$, $\lambda=0.2$, and discretization period $T_k=0.1$.
%matrix $B_w=0.01I$, and the available measurements $\{x(T)\}^{\rho}_{T=0}$, $\{u(T)\}^{\rho-1}_{T=0}$ with $\rho=800$.
Solving the data-based LMIs \eqref{Th2:LMI1} in Theorem \ref{Th2} with $\epsilon=2$,
the controller and triggering matrices are co-designed as follows
\begin{equation*}
\begin{aligned}
K=[
  \begin{array}{cccc}
  -0.2908 &  -4.0340
  \end{array}],~
\Omega&=\left[
  \begin{array}{cccc}
    0.0001 &   0.0007\\
    0.0007  &  0.0104\\
  \end{array}
\right].
\end{aligned}
\end{equation*}
We then simulate system \eqref{sys:sampling} under the triggering scheme in \eqref{sys:trigger} with initial condition $x(0) = [3 ~ -2]^{\top}$, as well as the dynamic variable $\eta({\tau_j^k})$, for $t\in [0,~400]$. Their trajectories are depicted in Fig. \ref{FIG:event}.
Evidently, both the system states and the dynamic variable converge to the origin,
which demonstrates the feasibility of our designed controller gain $K$ and the triggering matrix $\Omega$. It is also worth pointing out that \emph{only} $10$ measurements were transmitted to the controller under our proposed triggering scheme in \eqref{sys:trigger}, while $200$ measurements were sampled.
This validates the effectiveness of the proposed scheme in saving communication resources, while maintaining stability.

\begin{figure}[t]
\centering
\includegraphics[height=5.1cm,width=6.8cm]{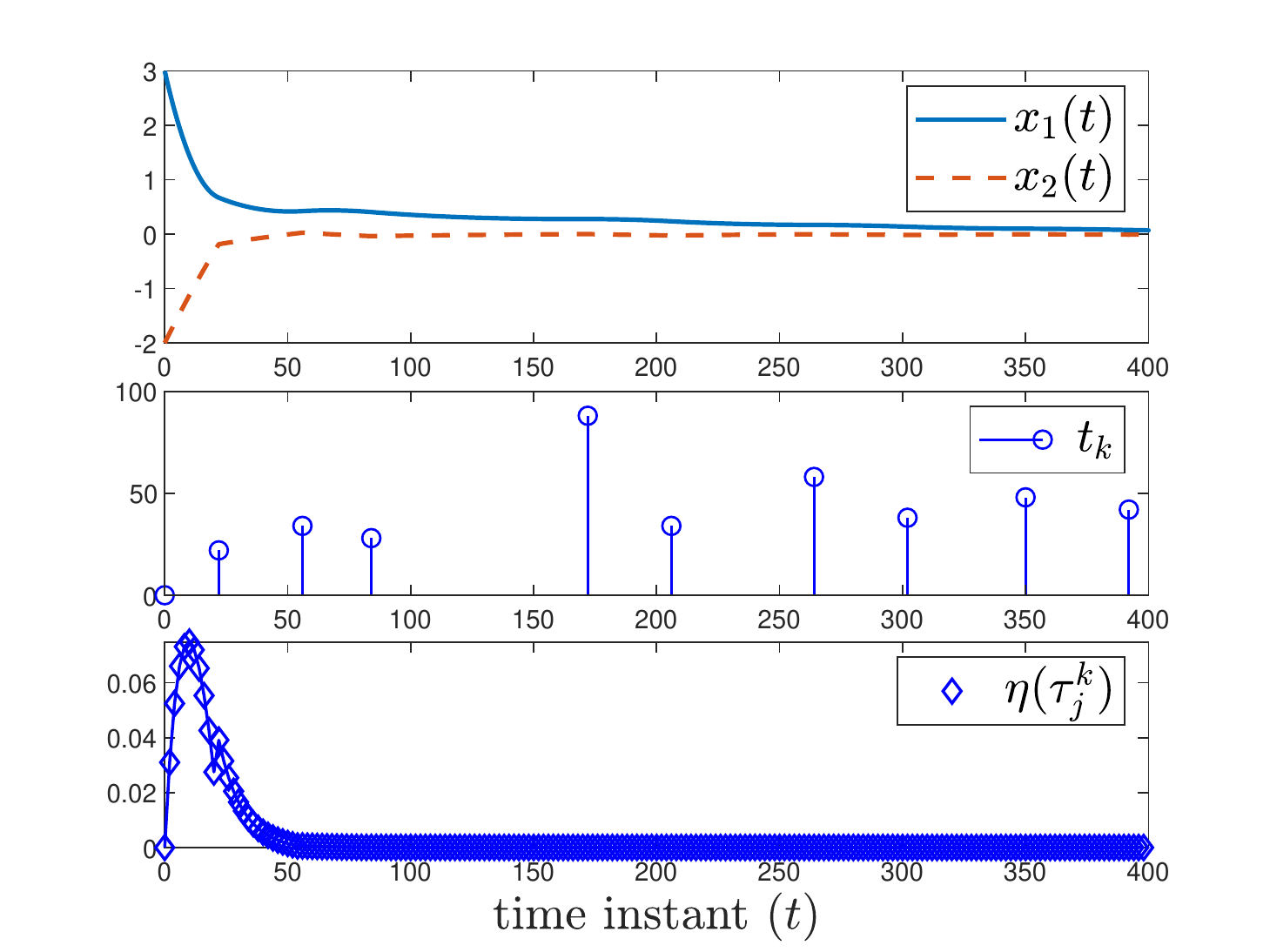}
\caption{Trajectories of system \eqref{sys:sampling} and dynamic variable $\eta({\tau_j^k})$ under data-driven ETS  \eqref{sys:trigger} with the initial condition $x(0) = [3~ -2]^{\top}$.} %with %parameters $h=2$, $\sigma_1=0.5$, $\sigma_2=0.5$, $\theta=2$,  $\lambda=0.2$, and initial condition $x(0) = [3~ -2]^{\top}$.}
\label{FIG:event}
\end{figure}

\begin{figure}[t]
	\centering
		 \includegraphics[height=5.1cm,width=6.8cm]{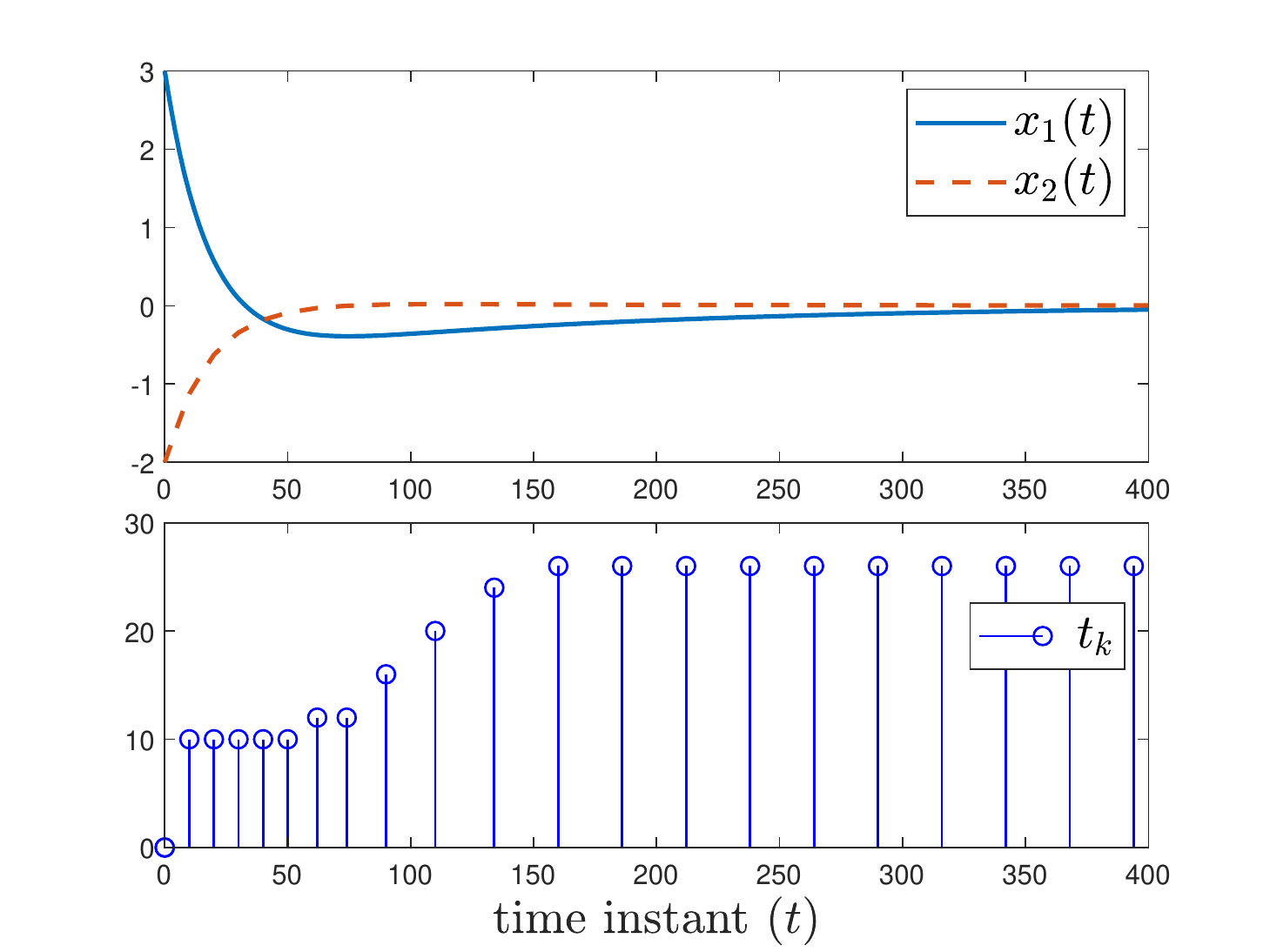}
	\caption{Trajectories of system \eqref{sys:sampling} under data-driven STS \eqref{self:function:data} with % parameters $\sigma_1=0.5$, $\sigma_2=0.5$, and
the initial condition $x(0) = [3~ -2]^{\top}$.}
	\label{FIG:self}
\end{figure}

(\emph{Data-driven STS control}) The data-based STS \eqref{self:function:data} for unknown systems is tested in the following.
Firstly, according to \cite[Algorithm 1]{wildhagen2021datadriven},
the bound on the lifted disturbance in Assumption \ref{Ass:disturbance:self} can be computed by using above measurements
with $\rho=750$ and $\bar{s}=50$. Then,
it is straightforward to obtain
the data-based matrices $\tilde{\mathcal{G}}^{s}(x(t_k))$ ($s=jh$) for $s\in \mathbb{N}_{[1, \,\bar{s}]}$ in \eqref{Th3:LMI} based on the matrix $\Theta_{AB}^s$ from the bound in Assumption \ref{Ass:disturbance:self}.

Applying the same controller gain and the triggering matrix as in the ETS control (which also guarantees the stability under the STS when using the same parameters in the ETS scheme, since as discussed in Remark \ref{ETStoSTS} the STS is a special case of the ETS), the system state trajectory under the data-based STS \eqref{self:function:data} with $s=jh$ is depicted in Fig. \ref{FIG:self} for $t\in [0,~400]$. All states converge to the origin, thereby validating the practicality of our proposed co-design method and STS. Note that, in Fig. \ref{FIG:self}, \emph{only} $22$ out of $200$ samples were transmitted to the controller. This illustrates the usefulness of the STS in reducing transmissions while ensuring stability.
Moreover, note that in Figs \ref{FIG:event} and \ref{FIG:self}, more data were generated by the STS compared with the ETS. The main reason is that the STS law \eqref{self:function:data}
is more conservative than that of ETS
\eqref{sys:trigger}. %(cf. Remark \ref{ETStoSTS}).

\begin{figure}[t]
\centering
\includegraphics[height=5.1cm,width=6.8cm]{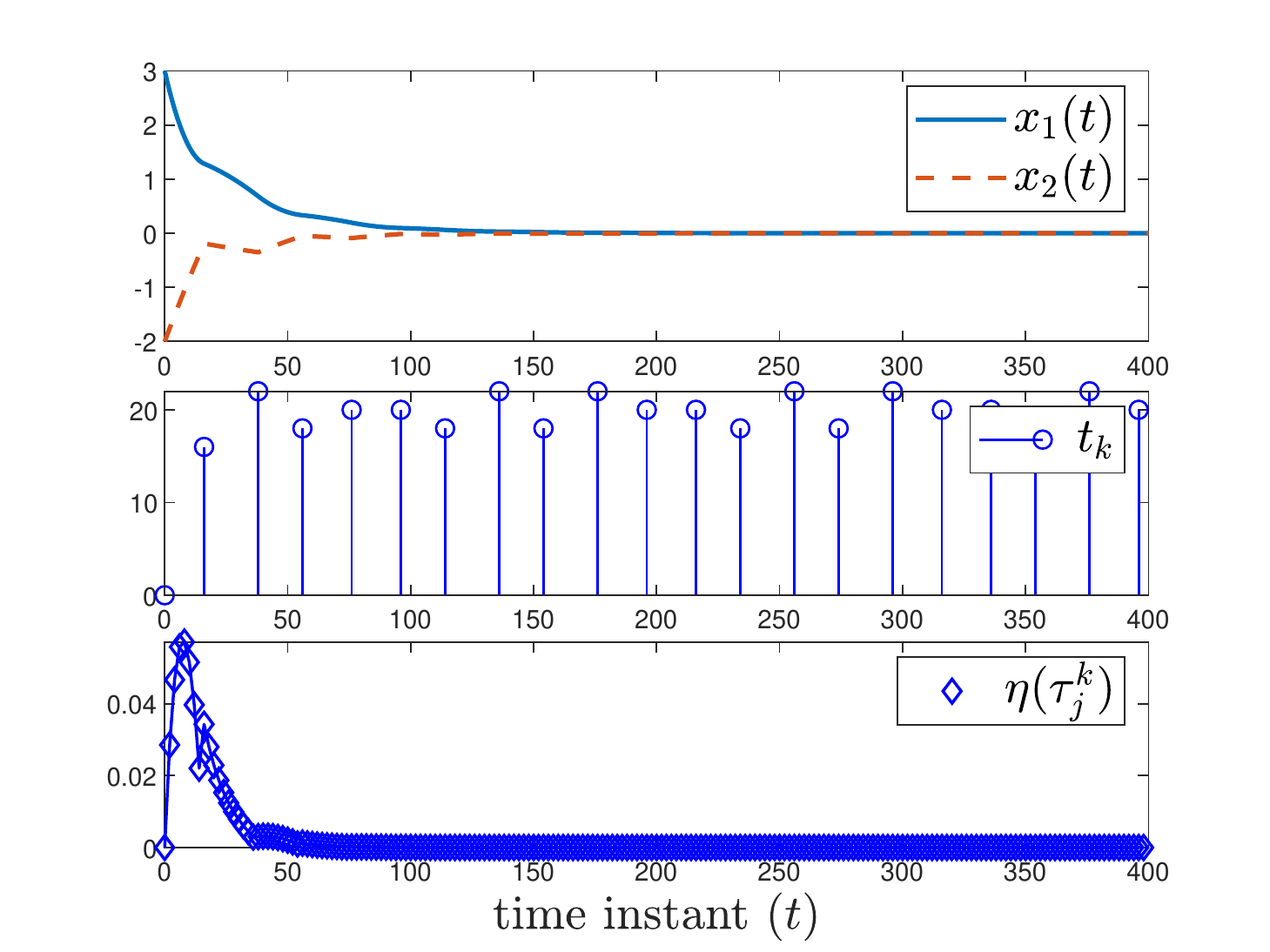}
\caption{Trajectories of system \eqref{sys:sampling} and dynamic variable $\eta({\tau_j^k})$ under identification-based ETS  \eqref{sys:trigger} with %parameters $h=2$, $\sigma_1=0.5$, $\sigma_2=0.5$, $\theta=2$,  $\lambda=0.2$, and
the initial condition $x(0) = [3~ -2]^{\top}$.}
\label{FIG:event:iden}
\end{figure}

\begin{figure}[t]
	\centering
		 \includegraphics[height=5.1cm,width=6.8cm]{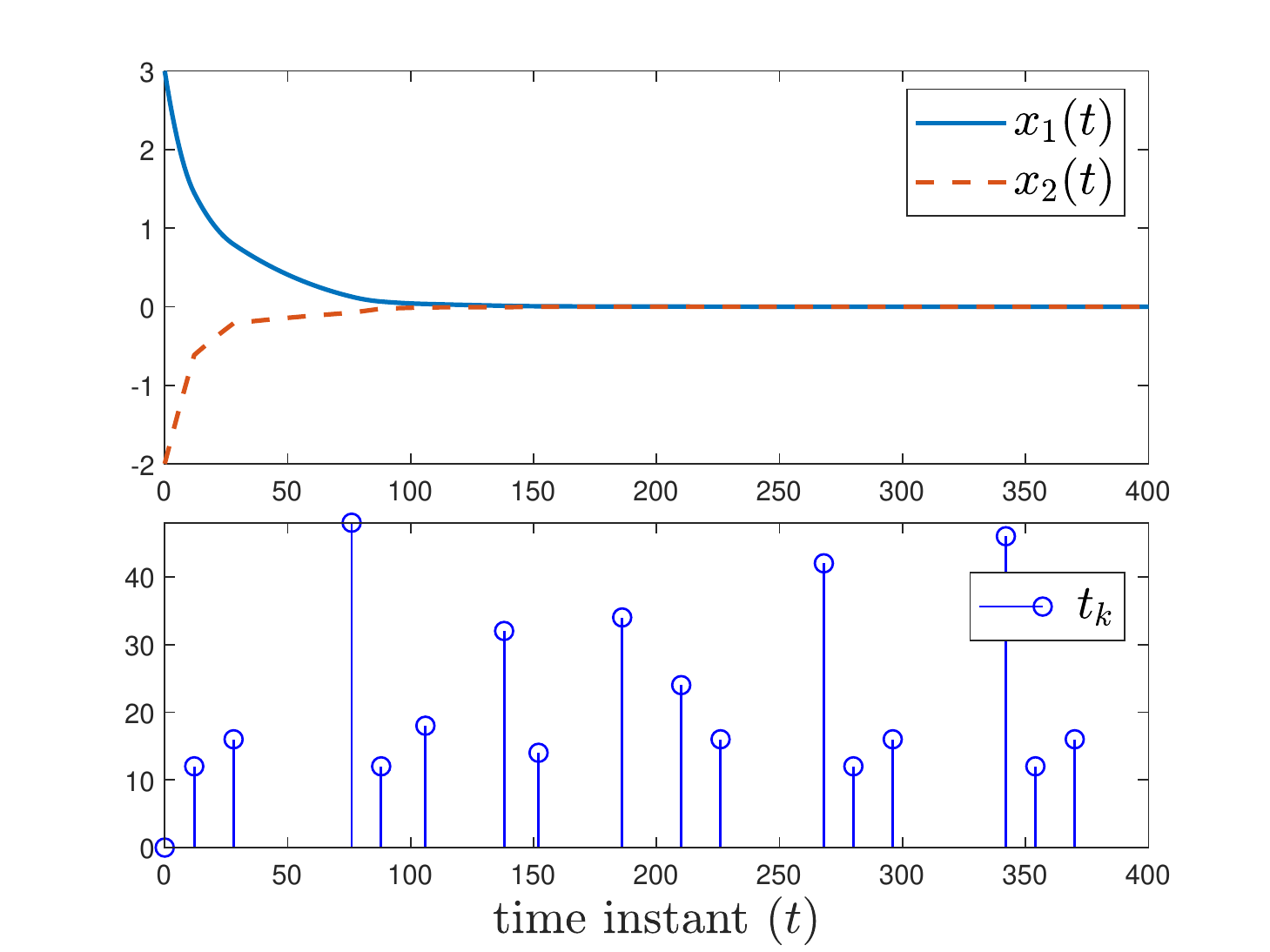}
	\caption{Trajectories of system \eqref{sys:sampling} under identification-based STS \eqref{sys:self:function} with %parameters $\sigma_1=0.5$, $\sigma_2=0.5$, and
the initial condition $x(0) = [3~ -2]^{\top}$.}
	\label{FIG:self:iden}
\end{figure}

\subsection{Comparison with identification-based ETS and STS control}
In this part, we compare the data-driven approaches described above to an alternative approach, consisting of least-squares identification of the system matrices and subsequent model-based ETS and STS as presented in Sections \ref{sec:eventtrigger} and \ref{sec:self}.
%This part briefly compares an alternative approach using least-squares estimation  \cite{Oymak2019identification} followed by the model-based approaches in this paper, %Theorem \ref{Th1} and Remark \ref{remark:model:design},
%to the data-driven methods mentioned above.
In the system identification step,
the following least-squares problem is considered
\begin{equation*}
[\hat{A}~\hat{B}]^{\top}=\arg\min_{(A~B)} \sum_{T=0}^{\rho}\|x(T+1)-Ax(T)-Bu(T)] \|_2^2.
\end{equation*}
%Following the procedure in \cite{Oymak2019identification}, t
The least-squares solution $[\hat{A}~\hat{B}]^{\top}$ is given by
$$[\hat{A}~\hat{B}]^{\top}=([X^{\top} ~U^{\top}]^{\top}[X^{\top} ~U^{\top}])^{-1}[X^{\top} ~U^{\top}]^{\top} X_{+}^{\top}$$
which is the estimation of the real system matrices $A$ and $B$. Next, we will show the
the results of identification-based ETS and STS control.

(\emph{Identification-based ETS and STS control})
Using the least-squares approach and the same $\rho=800$ measurements as in the data-driven control, the system matrices are estimated as
\begin{equation*}
\hat{A}=\left[
  \begin{array}{cc}
     1.0000 &   0.0995\\
    0.0000  &  0.9900
  \end{array}
\right],~
\hat{B}=\left[\begin{array}{cc}
     0.0005\\
    0.0100
  \end{array}
\right].
\end{equation*}
Subsequently, by solving the model-based approach in Remark \ref{remark:model:design} with the same parameters as in data-driven control, the controller and the triggering matrices
were computed as follows
\begin{equation*}
\begin{aligned}
K=[
  \begin{array}{cccc}
  -1.9775 &  -8.0873
  \end{array}],~
\Omega&=\left[
  \begin{array}{cccc}
    0.0011  &  0.0037\\
    0.0037  &  0.0168\\
  \end{array}
\right].
\end{aligned}
\end{equation*}
The trajectories of system \eqref{sys:trigger}
under the identification-based ETS \eqref{sys:trigger} and STS \eqref{sys:self:function} were depicted in Figs. \ref{FIG:event:iden} and \ref{FIG:self:iden}, respectively, over $t\in[0,400]$. The simulation results show that $x(t)$ approaches to zero as $t\to\infty$
under the identification-based ETS or STS with the above designed $K$ and $\Omega$. In light of the comparisons with the simulation results of the data-driven ETS and STS, more data ($21$ out of $200$ samples) were transmitted to the controller under the identification-based ETS as in Fig. \ref{FIG:event:iden} than the ones in Fig. \ref{FIG:event}, and almost the same amount of data ($17$ out of $200$ samples) was generated under the STS in Fig. \ref{FIG:self:iden} compared to the one in Fig. \ref{FIG:self}, while less
settling steps ($t=150$) were required to stabilize the system for the identification-based approaches. However, in contrast to the direct data-driven design, the
%it has been mentioned in Remark \ref{remark:motivation} that such
identification-based ETS and  STS approaches do not provide stability guarantees.

\section{Concluding Remarks}\label{sec:conclude}
In this paper, we proposed data-based event- and self-triggering transmission schemes for discrete-time systems leveraging a novel looped-functional approach.
We also developed methods for co-designing the controller gain and the triggering matrix for the discrete-time ETS and STS systems.
Finally, a numerical example was presented to corroborate the role of our triggering schemes in saving communication resources, as well as the merits and effectiveness of our co-designing methods.

	\bibliographystyle{IEEEtran}
	\bibliography{cas-refs}

\end{document}